\let\proof\@undefined                        
\let\endproof\@undefined                  
\algnewcommand{\algorithmicgoto}{\textbf{go to}}%
\algnewcommand{\Goto}[1]{\algorithmicgoto~\ref{#1}}%
\algnewcommand{\LineComment}[1]{\Statex \(\triangleright\) #1}
\algnewcommand{\LineCommentN}[1]{\Statex \hspace{1cm}\(\triangleright\) #1}
\newcommand{\argmin}{\operatornamewithlimits{arg\ min}}
\newtheorem{thm}{Theorem}
\newtheorem{lem}{Lemma}
\newtheorem{rem}{Remark}
\newtheorem{problem}{Problem}
\let\oldbibliography\thebibliography
\renewcommand{\thebibliography}[1]{%
	\oldbibliography{#1}%
}
\begin{document}
	
	\title{\LARGE \bf 
	Simultaneous Input and State Set-Valued $\mathcal{H}_\infty$-Observers For Linear Parameter-Varying  
	 Systems} 		
	\author{%
		 Mohammad Khajenejad \qquad Sze Zheng Yong \\
		\thanks{
			  M. Khajenejad and S.Z. Yong are with the School for Engineering of Matter, Transport and Energy, Arizona State University, Tempe, AZ, USA (e-mail:  mkhajene@asu.edu, szyong@asu.edu ).}
	}
	
	\maketitle
	\thispagestyle{empty}
	\pagestyle{empty}
	
	\begin{abstract}
		A fixed-order set-valued observer is presented for linear parameter-varying systems with bounded-norm noise and under completely unknown attack signals, 
		which simultaneously finds bounded sets of states and unknown inputs that include the true state and inputs. The proposed observer can be designed using semidefinite programming with LMI constraints and is optimal in the minimum $\mathcal{H}_\infty$-norm sense. We show that the strong detectability of each \emph{constituent} linear time-invariant system is a necessary condition for the existence of such an observer, as well as the boundedness of set-valued estimates. 
		 Furthermore, sufficient conditions are provided for the upper bounds of the estimation errors to converge to steady state values and finally, the results of such a set-valued observer are exhibited through an illustrative example. 

 \end{abstract}
\vspace{-0.15cm}
\section{Introduction}
The security of Cyber-Physical Systems (CPS) is emerging as an extremely critical and important issue. Since physical and software components are deeply intertwined in CPS,
such systems are potentially vulnerable to adversarial attacks, which could be harmful for both the physical systems and their operators. 
	Given that adversarial attackers may behave strategically, there are many potential avenues through which they can cause harm, steal information/power, etc. 
	
	Misleading the system operator by inserting counterfeit data into sensor and actuator signals (false data injection) is among  the most  common and extensive attacks on CPS. Hence, a significant amount of effort has been invested in new designs for estimation and control against false data injection attacks. 
	Due to the nature of the attack signals, it is not justifiable to impose any kind of restrictive assumptions on them (e.g., stochastic with normal distribution or deterministic with bounded norm) i.e., they can be \emph{anything}, so none of the classical Kalman filtering based methods are applicable. Moreover, considering the complicated structure of CPS (several physical, computing and communication components), they can be modeled and represented more realistically by time-varying and nonlinear dynamic systems rather than linear time-invariant ones. Taking all these facts into account, this paper attempts to design a resilient observer for a particular class of linear time-varying systems known as linear parameter-varying systems, which to the best of our knowledge is a novel approach.
\\ \emph{Literature review.} There are several different frameworks for  simultaneous input and state estimation of linear time-varying stochastic systems with unknown inputs, assuming that the noise signals are Gaussian and white. 
The authors in \cite{Gillijns.2007,Gillijns.2007b,Yong.Zhu.ea.CDC15_General,Yong.Zhu.ea.Automatica16}
    apply Kalman filtering inspired recursive filter design approaches (modified versions of unbiased minimum-variance estimation methods), where some 
    additional assumptions are needed to guarantee the stability of the filters, while \cite{lu2016framework} 
    uses a modified double-model adaptive estimation method. 
   However, these Kalman filtering inspired approaches 
      are not applicable for set-membership estimation problems (cf. \cite{YongACC2018} for a comprehensive discussion), as is considered in this paper. 
      
    In the context of attack-resilient estimation, where adversarial signals can be malicious and strategic and thus, bounds on the unknown inputs/attack signals are completely unknown, there have been a number of proposed approaches in the literature for systems with bounded errors (e.g., \cite{pajic2015attack,nakahira2015dynamic,yong2016robust}),
     but all of them only consider point estimates, i.e, the most likely or best single estimate as opposed to a set-valued estimate. Specifically, the work in \cite{pajic2015attack} only computes error bounds for the initial state and \cite{nakahira2015dynamic} assumes zero initial states and does not consider any optimality criteria.  
 The author in \cite{YongACC2018} and references therein discussed the advantages of set-valued observers (when compared to point estimators) in terms of providing hard accuracy bounds, which are important to guarantee safety \cite{blanchini2012convex}. 
In addition, the use of fixed-order set-valued methods can help decrease the complexity of optimal observers\cite{milanese1991optimal}, 
 which grows with time. 
Hence, the work in \cite{YongACC2018} presents 
a fixed-order set-valued observer for linear time-invariant discrete time systems with bounded errors, that simultaneously finds bounded sets of compatible states and unknown inputs that are optimal in the minimum $\mathcal{H}_\infty$-norm sense, i.e., with minimum average power amplification, 
 which we aim to generalize in this paper for linear parameter-varying systems. 
\\  \emph{Contributions.} 
We propose a novel fixed-order set-valued observer for linear parameter-varying systems with unknown input and bounded noise signals  that simultaneously finds bounded sets of states and unknown inputs that contain the true state and unknown input and are compatible/consistent with the measurement outputs. 
Specifically, we consider linear parameter-varying system dynamics that can be presented as a convex combination of linear time-invariant \emph{constituent} dynamics. 
In addition, we provide necessary conditions for the boundedness of the set-valued estimates. We further prove the optimality of the filter in the minimum $\mathcal{H}_\infty$-norm sense, i.e., minimum average power amplification, by converting the corresponding problem into a tractable formulation using semi-definite programming with LMI constraints  that is readily implementable using off-the-shelf optimization solvers. 
 We also show 
   that strong detectability of each constituent system is a necessary condition for the existence of such an $\mathcal{H}_\infty$-observer.
    Then, we provide some sufficient conditions for the convergence of upper bounds of the state and input estimation errors to steady state  and for obtaining these steady state bounds. Finally, we demonstrate the effectiveness of our proposed set-valued observer through an illustrative example.


\emph{Notation.} 
$\mathbb{R}^n$ denotes the $n$-dimensional Euclidean space 
and $\mathbb{N}$ nonnegative integers. For a vector $v \in \mathbb{R}^n$ and a matrix $M \in \mathbb{R}^{p \times q}$, $\|v\|\triangleq \sqrt{v^\top v}$ and $\|M\|$ denote their  (induced) 2-norm. 
Moreover, the transpose, inverse, 
Moore-Penrose pseudoinverse 
and rank of $M$ are given by $M^\top$, $M^{-1}$, $M^\dagger$ 
and ${\rm rk}(M)$. For a symmetric matrix $S$, $S \succ 0$ ($S \succeq 0$) 
is positive (semi-)definite.

\section{Problem Statement} \label{sec:Problem}
\noindent\textbf{\emph{System Assumptions.}} 
Consider the following linear parameter-varying discrete-time bounded-error system: 
\begin{align} \label{eq:mainSystem}
\begin{split}
x_{k+1}&=\textstyle\sum_{i=1}^N \lambda_{i,k}({A}^{i} x_{k}+{B}^{i} u_k   + {w}^{i}_k)+G d_k,\\
y_k&= {C} x_k +\textstyle\sum_{i=1}^N \lambda_{i,k}({D}^{i} u_k  + {v}^{i}_{k})+ H d_k, \end{split}
\end{align}
where $\lambda_{i,k}$ is known and satisfies $ 0 \leq \lambda_{i,k} \leq1 , \sum_{i=1}^N \lambda_{i,k} =1, \forall k $.
$x_k \in \mathbb{R}^n$ is the state vector at time $k \in \mathbb{N}$, 
$u_k \in \mathbb{R}^m$ is a known input vector, $d_k \in \mathbb{R}^p$ is an unknown input vector, and $y_k \in \mathbb{R}^l$ is the measurement vector. The process noise ${w}^{i}_{k} \in \mathbb{R}^n$ and the measurement noise ${v}^{i}_{k} \in \mathbb{R}^l$ are assumed to be bounded and $\ell_\infty$ sequences, with $\|{w}^{i}_{k}\| \leq \eta_w$ and $\|{v}^{i}_{k}\|\leq \eta_v$.We also assume an estimate $\hat{x}_0$ of the initial state $x_0$ is available, where $\|\hat{x}_0-x_0\|\leq \delta_0^x$. The matrices ${A}^{i}$, ${B}^{i}$, ${C}$, ${D}^{i}$, $G$ and $H$ are known for $ i \in\left\{1,2,\dots ,N\right\} $  
and of appropriate dimensions, where $G$ and $H$ are matrices that 
encode the \emph{locations} through which the unknown input or attack signal can affect the system dynamics and measurements and $ N $ is the number of \emph{constituent} systems.
 Note that no assumption is made on $H$ to be either the zero matrix (no direct feedthrough), or to have full column rank when there is direct feedthrough. Without loss of generality, we assume that ${\rm rk}[G^\top\; H^\top ]=p$, $n \geq l \geq 1$, $l \geq p \geq 0$, $m \geq 0$
 and each $({A}^{i},{B}^{i},{C},{D}^{i},G,H),i \in\left\{1,2,\dots ,N\right\}$ represents a  linear time-invariant constituent system:
\begin{align} \label{eq:indSystem}
\begin{array}{ll}
x^i_{k+1}&= {A}^{i} x^i_k + {B}^{i} u_k + G d_k + {w}^{i}_{k},\\
y^i_k&= {C} x_k + {D}^{i} u_k + H d_k + {v}^{i}_{k}. \end{array}
\end{align}

\noindent \textbf{\emph{Unknown Input (or Attack) Signal Assumptions.}} 
The unknown inputs $d_k$ are not constrained to be a signal of any type (random or strategic) nor to follow any model, thus no prior `useful' knowledge of the dynamics of $d_k$ is available (independent of $\{d_\ell\}$ $\forall k\neq \ell$, $\{w_\ell\}$ and $\{v_\ell\}$ $ \forall  \ell$). We also do not assume that $d_k$ is bounded or has known bounds and thus, $d_k$ is suitable for representing adversarial 
attack signals.

The simultaneous input and state set-valued observer design problem 
can be stated as follows:
\begin{problem}
 Given a linear parameter-varying discrete-time bounded-error system with unknown inputs \eqref{eq:mainSystem}
, design an optimal and stable filter that simultaneously finds bounded sets 
of compatible states and unknown inputs in the minimum $\mathcal{H}_\infty$-norm sense, i.e., with minimum average power amplification. \end{problem}

\section{Preliminary Material}
\subsection{System Transformation}
In order to decouple the output equation into two components, first a transformation is carried out for each of the constituent subsystems, one with a full rank direct feedthrough matrix and the other without direct feedthrough. Note that this similarity transformation is similar to the one in \cite{YongACC2018} 
and is not the same as the one in \cite{Yong.Zhu.ea.Automatica16}, which is no longer applicable as it was based on the noise error covariance.

Let $p_{H}\triangleq {\rm rk} (H)$. Using singular value decomposition, we rewrite the direct feedthrough matrix $H$  as
$H= \begin{bmatrix}U_{1}& U_{2} \end{bmatrix} \begin{bmatrix} \Sigma & 0 \\ 0 & 0 \end{bmatrix} \begin{bmatrix} V_{1}^{\, \top} \\ V_{2}^{\, \top} \end{bmatrix}$, 
where $\Sigma \in \mathbb{R}^{p_{H} \times p_{H}}$ is a diagonal matrix of full rank, $U_{1} \in \mathbb{R}^{l \times p_{H}}$, $U_{2} \in \mathbb{R}^{l \times (l-p_{H})}$, $V_{1} \in \mathbb{R}^{p \times p_{H}}$ and $V_{2} \in \mathbb{R}^{p \times (p-p_{H})}$, while $U\triangleq \begin{bmatrix} U_{1} & U_{2} \end{bmatrix}$ and $V\triangleq \begin{bmatrix} V_{1} & V_{2} \end{bmatrix}$ are unitary matrices. 
When there is no direct feedthrough, $\Sigma$, $U_{1}$ and $V_{1}$ are empty matrices\footnote{\ Based on the convention that the inverse of an empty matrix is an empty matrix and the assumption that operations with empty matrices are possible.}, 
and $U_{2}$ and $V_{2}$ are arbitrary unitary matrices.

Then, 
we decouple the unknown input into two orthogonal components: 
\begin{align}
d_{1,k}=V_{1}^\top d_k, \quad
d_{2,k}=V_{2}^\top d_k.
\end{align}
Considering that $V$ is unitary, $d_k =V_{1} d_{1,k}+V_{2} d_{2,k}$ and  we can represent the system \eqref{eq:mainSystem}  as:
\begin{align}
\nonumber x_{k+1}
&= \textstyle\sum_{i=1}^N \lambda_{i,k} ({A}^{i} x_k + {B}^{i} u_k+{w}^{i}_{k}) + G_{1} d_{1,k} +\hspace{-0.05cm} G_{2} d_{2,k}, 
\\  y_k
&={C} x_k + \textstyle\sum_{i=1}^{N} \lambda_{i,k} ({D}^{i} u_k +{v}^{i}_{k})+ H_{1} d_{1,k}    \label{eq:y}
\end{align}
where $G_{1} \triangleq G V_{1}$, $G_{2} \triangleq G V_{2}$ and $H_{1} \triangleq H V_{1}=U_{1} \Sigma$. Next, the output $y_k$ is decoupled 
using a nonsingular transformation $T =\begin{bmatrix} T_{1}^\top & T_{2}^\top \end{bmatrix}^\top \triangleq U^\top =\begin{bmatrix} U_{1} & U_{2} \end{bmatrix}^\top $ 
to get $z_{1,k} \in \mathbb{R}^{p_{H}}$ and $z_{2,k} \in \mathbb{R}^{l-p_{H}}$ given by
\begin{gather} \label{eq:sysY} \hspace{-0.2cm}\begin{array}{lll}
z_{1,k} &\triangleq T_{1} y_k =U_{1}^\top y_k \\ &= {C}_{1} x_k + \Sigma d_{1,k} + \sum_{i=1}^N \lambda_{i,k} {D}^{i}_1 u_k + \sum_{i=1}^N \lambda_{i,k}{v}^{i}_{1,k}\\
z_{2,k} &\triangleq T_{2} y_k =U_{2}^\top y_k \\ &=  {C}_{2}  x_k + \sum_{i=1}^N \lambda_{i,k} {D}^{i}_2 u_k + \sum_{i=1}^N \lambda_{i,k} {v}^{i}_{2,k}\vspace{-0.5cm}\end{array} 
\end{gather}
where ${C}_{1} \triangleq U_1^\top {C}$, ${C}_{2} \triangleq U_{2}^\top {C}$, ${D}^{i}_{1} \triangleq U_{1}^\top {D}^{i}$, ${D}^{i}_{2} \triangleq  U_{2}^\top {D}^{i}$, ${v}^{i}_{1,k} \triangleq U_{1}^\top {v}^{i}_k$ and ${v}^{i}_{2,k} \triangleq  U_{2}^\top {v}^{i}_k$. This transform is also chosen such that $\|\begin{bmatrix} {{v}^{i}_{1,k}}^\top & {{v}^{i}_{2,k}}^\top \end{bmatrix}^\top\|=\| U^\top {v}^{i}_k\|=\|{v}^{i}_k\|$. 
\section{Fixed-Order Simultaneous Input and State Set-Valued Observers} \label{sec:observer}
\subsection{Set-Valued Observer Design} \label{sec:obsv}
We consider a recursive three-step set-valued observer design. 
This design utilizes a similar framework as in 
 \cite{YongACC2018} and contains an \emph{unknown input estimation} step that uses the current measurement and the set of compatible states to estimate the set of compatible unknown inputs, 
a \emph{time update} step which propagates the compatible set of states based on the system dynamics, and a \emph{measurement update} step that uses the current measurement to update the  set of compatible states. To sum up, our target is to design a 
three-step recursive set-valued observer of the form:
\begin{align*}
\text{\emph{Unknown Input Estimation:}} & \ \hat{D}_{k-1} = \mathcal{F}_d(\hat{X}_{k-1},u_k),\\
\text{\emph{Time Update:}} & \quad \ \hat{X}^\star_k = \mathcal{F}_x^\star(\hat{X}_{k-1},\hat{D}_{k-1},u_k),\\
\text{\emph{Measurement Update:}} & \quad \ \hat{X}_k = \mathcal{F}_x(\hat{X}^\star_{k},u_k,y_k),
\end{align*}
where $\mathcal{F}_d$, $\mathcal{F}^\star_x$ and $\mathcal{F}_x$ are to-be-designed set mappings, while $\hat{D}_{k-1}$, $\hat{X}^\star_{k}$ and $\hat{X}_k$ are the sets of compatible unknown inputs at time $k-1$, propagated, and updated states at time $k$, correspondingly. It is important to note that $d_{2,k}$ cannot be estimated from $y_k$ since it does not affect $z_{1,k}$ and $z_{2,k}$.
Thus, the only estimate we can obtain in light of \eqref{eq:sysY} is a (one-step) delayed estimate of $\hat{D}_{k-1}$.
The reader may refer to a previous work \cite{Yong.Zhu.ea.CDC15_General} for a complete discussion on when a delay is absent or when we can expect further delays.
Similar to \cite{chen2005observer,blanchini2012convex,YongACC2018}, as the complexity of optimal 
observers increases with time, 
only the fixed-order recursive filters will be considered. In particular, we choose set-valued estimates of the form:
\begin{align*}
\begin{array}{rl}
\hat{D}_{k-1}&=\{d \in \mathbb{R}^p: \|d_{k-1}-\hat{d}_{k-1}\|\leq \delta^d_{k-1}\},\\
\hat{X}^\star_k&=\{x \in \mathbb{R}^n: \|x_k-\hat{x}^\star_{k|k}\|\leq \delta^{x,\star}_{k}\},\\
\hat{X}_k&=\{x \in \mathbb{R}^n: \|x_k-\hat{x}_{k|k}\|\leq \delta^x_k\}.
\end{array}
\end{align*}
In other words, we restrict the estimation errors to balls of norm $\delta$. In this setting, the observer design problem is equivalent to finding the centroids $\hat{d}_{k-1}$, $\hat{x}^\star_{k|k}$ and $\hat{x}_{k|k}$ as well as the radii $\delta^d_{k-1}$, $\delta_k^{x,\star}$ and $\delta_k^x$ of the sets $\hat{D}_{k-1}$, $\hat{X}^\star_{k}$ and $\hat{X}_k$, respectively. 
In addition, we limit our attention to observers for the centroids $\hat{d}_{k-1}$, $\hat{x}^\star_{k|k}$ and $\hat{x}_{k|k}$ that belong to the class of three-step recursive filters given in \cite{Gillijns.2007b} and \cite{Yong.Zhu.ea.Automatica16}, defined as follows for each time $k$ (with $\hat{x}_{0|0}=\hat{x}_0$):

\noindent \emph{Unknown Input Estimation}: \vspace{-0.1cm}
\begin{align}
&\hat{d}_{1,k} =M_{1} (z_{1,k}-{C}_{1}\hat{x}_{k|k} -\textstyle\sum_{i=1}^N\lambda_{i,k}{D}^{i}_{1} u_k), \label{eq:variant1}\\
&\hat{d}_{2,k-1}=M_{2} (z_{2,k}-{C}_{2}\hat{x}_{k|k-1} -\textstyle\sum_{i=1}^N\lambda_{i,k}{D}^{i}_{2} u_k), \label{eq:d2}\\
&\hat{d}_{k-1}= V_{1} \hat{d}_{1,k-1} + V_{2} \hat{d}_{2,k-1} \label{eq:d}. 
\end{align}
\emph{Time Update}: \vspace{-0.1cm}
\begin{align}
  \hat{x}_{k|k-1}\hspace{-0.1cm}&=\hspace{-0.1cm} \textstyle\sum_{i=1}^N\hspace{-0.1cm}\lambda_{i,k-1}\hspace{-0.05cm}(\hspace{-0.05cm} {A}^{i} \hat{x}_{k-1 | k-1} \hspace{-0.1cm}+\hspace{-0.1cm} {B}^{i} u_{k-1}\hspace{-0.05cm}) \hspace{-0.08cm} +\hspace{-0.08cm} G_{1}\hspace{-0.05cm} \hat{d}_{1,k-1},\label{eq:time} \\
\hat{x}^\star_{k|k}&=\hat{x}_{k|k-1}+G_{2} \hat{d}_{2,k-1}. \label{eq:xstar}
\end{align}
\emph{Measurement Update}: \vspace{-0.1cm}
\begin{align}
\begin{array}{rl}
 \hat{x}_{k|k}&=\hat{x}^\star_{k|k} + L(y_{k} - {C} \hat{x}^\star_{k|k}-\textstyle\sum_{i=1}^N\lambda_{i,k} {D}^{i} u_k) \\
&= \hat{x}^\star_{k|k} +\tilde{L}(z_{2,k}-{C}_{2} \hat{x}^\star_{k|k}-\textstyle\sum_{i=1}^N\lambda_{i,k} {D}^{i}_{2} u_k),   \hspace{-0.2cm} \label{eq:stateEst}
\end{array}
\end{align}
where $L \in \mathbb{R}^{n \times l}$, $\tilde{L} \triangleq L U_{2}\in \mathbb{R}^{n \times (l-p_{H})}$, $M_{1} \in \mathbb{R}^{p_{H} \times p_{H}}$ and $M_{2} \in \mathbb{R}^{(p-p_{H}) \times (l-p_{H})}$ are observer gain matrices that are designed according to Theorem \ref{thm:L_gain}. The main result in Theorem  \ref{thm:L_gain} is derived by minimizing the ``volume" of the set of compatible states and unknown inputs, quantified by the radii $\delta^d_{k-1}$, $\delta_k^{x,\star}$ and $\delta_k^x$.  
Note also that we applied $L=L U_{2} U_{2}^\top=\tilde{L} U_2^\top$ from Lemma \ref{lem:unbiased} 
into \eqref{eq:stateEst}.
 The state and input estimation errors are defined as $\tilde{x}_{k|k}\triangleq x_k- \hat{x}_{k|k},\tilde{d}_{k-1}\triangleq d_{k-1}-\hat{d}_{k-1}, \tilde{d}_{1,k-1}\triangleq d_{1,k-1}-\hat{d}_{1,k-1}, \tilde{d}_{2,k-1}\triangleq d_{2,k-1}-\hat{d}_{2,k-1}$ respectively.
In Lemmas \ref{lem:unbiased} and \ref{lem:uniform detecatbility}, we will provide necessary conditions for boundedness of estimation errors and sufficient conditions for stability of the observer.
All the proofs are provided in the Appendix.
\begin{lem}[Necessary Conditions for Boundedness of Set-Valued Estimates {\cite[Lemma 1]{YongACC2018}}] 
	\label{lem:unbiased}
	The input and state estimation errors, ($\tilde{d}_{k-1}$ and 
	$\tilde{x}_{k|k}$), are bounded for all $k$ (i.e., the set-valued estimates are bounded with radii $\delta^d_{k-1}, \delta_k^{x,\star}, \delta_k^x < \infty$), 
	only if 
	$ M_{1} \Sigma=I $, $p \leq l$,
	$ M_{2}C_{2} G_{2} = I $ and $ L U_{1}=0 $ . Consequently, ${\rm rk}(C_{2} G_{2})=p-p_{H}$, $M_1=\Sigma^{-1}$, $M_2=(C_2 G_2)^\dagger$ and $L=L U_{2} U_{2}^\top=\tilde{L} U_2^\top$.
\end{lem}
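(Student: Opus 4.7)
The plan is to derive explicit closed-form expressions for the three estimation errors $\tilde d_{1,k}$, $\tilde d_{2,k-1}$, and $\tilde x_{k|k}$ by substituting the transformed output equations \eqref{eq:sysY} and the dynamics \eqref{eq:mainSystem} into the observer recursions \eqref{eq:variant1}--\eqref{eq:stateEst}, and then to exploit the fact that $d_{1,k}$ and $d_{2,k}$ are completely unknown with no known bounds. Any surviving term in an error that is linear in $d_{1,k}$ or $d_{2,k-1}$ with a nonzero coefficient matrix can be driven to infinity by an adversarial choice, so boundedness forces that coefficient to vanish; this single mechanism will produce every necessary condition in the statement.

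First I would carry out the substitution for $\hat d_{1,k}$, which yields
\[
\tilde d_{1,k} = (I - M_1 \Sigma)\, d_{1,k} - M_1 C_1 \tilde x_{k|k} - M_1 \textstyle\sum_i \lambda_{i,k} v^i_{1,k}.
\]
Arbitrariness of $d_{1,k}$ forces $M_1 \Sigma = I$, and invertibility of the square $\Sigma \in \mathbb{R}^{p_H \times p_H}$ gives $M_1 = \Sigma^{-1}$. Next, combining \eqref{eq:time} with the dynamics produces the one-step propagation $\tilde x_{k|k-1} = \sum_i \lambda_{i,k-1}(A^i \tilde x_{k-1|k-1} + w^i_{k-1}) + G_1 \tilde d_{1,k-1} + G_2 d_{2,k-1}$; inserting this into \eqref{eq:d2} gives
\[
\tilde d_{2,k-1} = (I - M_2 C_2 G_2)\, d_{2,k-1} + \bigl(\text{terms linear in } \tilde x_{k-1|k-1},\ \tilde d_{1,k-1},\ w,\ v\bigr).
\]
Hence $M_2 C_2 G_2 = I$ is required; this forces $C_2 G_2 \in \mathbb{R}^{(l-p_H)\times(p-p_H)}$ to have full column rank $p - p_H$, whence $l \geq p$, and $M_2 = (C_2 G_2)^\dagger$ realizes the required left inverse.

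The crucial cancellation for the state occurs in $\tilde x^\star_{k|k} = \tilde x_{k|k-1} - G_2 \hat d_{2,k-1}$: writing $\hat d_{2,k-1} = d_{2,k-1} - \tilde d_{2,k-1}$ eliminates the unbounded $G_2 d_{2,k-1}$ contribution, leaving $\tilde x^\star_{k|k}$ driven only by $\tilde x_{k-1|k-1}$, $\tilde d_{1,k-1}$, $\tilde d_{2,k-1}$, and noise. Using the SVD identity $H = U_1 \Sigma V_1^\top$ together with $V_1^\top V_2 = 0$ gives $H d_k = H_1 d_{1,k}$, so that \eqref{eq:stateEst} yields
\[
\tilde x_{k|k} = (I - LC)\, \tilde x^\star_{k|k} - L H_1 d_{1,k} - L \textstyle\sum_i \lambda_{i,k} v^i_k.
\]
Arbitrariness of $d_{1,k}$ then forces $L H_1 = L U_1 \Sigma = 0$; invertibility of $\Sigma$ yields $L U_1 = 0$. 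Finally, unitarity of $U = [\,U_1\ U_2\,]$ gives $U_1 U_1^\top + U_2 U_2^\top = I$, so $L = L(U_1 U_1^\top + U_2 U_2^\top) = L U_2 U_2^\top = \tilde L U_2^\top$, closing the lemma.

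I expect the only real obstacle to be bookkeeping: keeping the time indices on the weights $\lambda_{i,k}$ versus $\lambda_{i,k-1}$ consistent between the time-update and measurement-update channels, and verifying that the delayed term $G_2 d_{2,k-1}$ really cancels in $\tilde x^\star_{k|k}$ so that no unbounded signal survives into the measurement update. Once those cancellations are checked, each condition in the lemma follows immediately from \emph{the coefficient of an unbounded signal must be zero}.
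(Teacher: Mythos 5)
Your proof is correct. Note that the paper does not actually prove this lemma itself: it is imported verbatim from \cite[Lemma~1]{YongACC2018}, and the appendix only writes down the error equations \emph{after} these conditions have been imposed (its equations for $\tilde{d}_{1,k}$, $\tilde{d}_{2,k-1}$, $\tilde{x}^\star_{k|k}$, $\tilde{x}_{k|k}$ are exactly the residuals you obtain once $M_1\Sigma=I$, $M_2C_2G_2=I$, $LU_1=0$ are substituted). Your argument is the standard one used in that reference: since all three radii are assumed finite, every term except the one multiplying the current arbitrary unknown input is bounded, so the coefficients $I-M_1\Sigma$, $-LH_1=-LU_1\Sigma$ and $I-M_2C_2G_2$ must vanish, and your bookkeeping (the cancellation of $G_2 d_{2,k-1}$ in $\tilde{x}^\star_{k|k}$, $Hd_k=H_1d_{1,k}$ from $V_1^\top V_2=0$, and $L=LU_2U_2^\top$ from unitarity of $U$) matches the paper's expressions. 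One small point you handled correctly but is worth keeping explicit: $M_2C_2G_2=I$ only forces $C_2G_2$ to have full column rank $p-p_H$ (hence $p\le l$); $M_2=(C_2G_2)^\dagger$ is the canonical choice of left inverse adopted in the lemma, not a uniquely determined consequence unless $l-p_H=p-p_H$.
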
 
\begin{lem}[Sufficient Conditions for Observer Stability] 
\label{lem:uniform detecatbility}
A sufficient condition for the stability of the set-valued observer is that $ (\overline{A}_k , C_{2}  ) $ is uniformly detectable\footnote{\ For conciseness, the readers are referred to \cite[Section 2]{Anderson.Moore.1981} for the definition of uniform detectability. A spectral test can be found in \cite{Peters.Iglesias.1999}.} for each $k$, where $ \overline{A}_k \triangleq ( I - G_2 M_2 C_2) \hat{A}_k $ and $ \hat{A}_k \triangleq \sum_{i=1}^N \lambda_{i,k} {A}^{i} - G_1 M_1 C_1 $. 
\end{lem}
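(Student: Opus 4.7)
My plan is to derive the closed-loop error recursion explicitly, show that its homogeneous part is governed by the matrix $(I-\tilde L_k C_2)\overline A_{k-1}$, and then invoke uniform detectability to conclude boundedness of all three radii.

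First, I would compute the input-estimation error $\tilde d_{1,k-1}$. Substituting the expression for $z_{1,k-1}$ from \eqref{eq:sysY} into \eqref{eq:variant1} and using $M_1\Sigma=I$ from Lemma \ref{lem:unbiased} gives
\begin{align*}
\tilde d_{1,k-1} = -M_1 C_1\,\tilde x_{k-1|k-1} - M_1\textstyle\sum_{i=1}^N \lambda_{i,k-1} v^i_{1,k-1}.
\end{align*}
Plugging this into the time update \eqref{eq:time} and subtracting from the true dynamics \eqref{eq:mainSystem} (in its transformed form) yields
\begin{align*}
x_k-\hat x_{k|k-1} = \hat A_{k-1}\,\tilde x_{k-1|k-1} + G_2 d_{2,k-1} + \omega_{k-1},
\end{align*}
where $\omega_{k-1}$ collects the bounded noise terms $\sum_i \lambda_{i,k-1}(w^i_{k-1} - G_1 M_1 v^i_{1,k-1})$, and where $\hat A_{k-1}=\sum_i\lambda_{i,k-1}A^i-G_1M_1C_1$ as defined in the statement.

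Next, I would process the $d_2$-estimation step and the star-update \eqref{eq:xstar}. Substituting $z_{2,k}$ into \eqref{eq:d2} and using $M_2 C_2 G_2=I$ from Lemma \ref{lem:unbiased} gives
\begin{align*}
\tilde d_{2,k-1} = -M_2 C_2\bigl(\hat A_{k-1}\tilde x_{k-1|k-1}+\omega_{k-1}\bigr) - M_2\textstyle\sum_i\lambda_{i,k} v^i_{2,k},
\end{align*}
so the $G_2 d_{2,k-1}$ term is exactly cancelled. Subtracting \eqref{eq:xstar} from $x_k$ then produces
\begin{align*}
x_k-\hat x^\star_{k|k} = (I-G_2 M_2 C_2)\bigl(\hat A_{k-1}\tilde x_{k-1|k-1}+\omega_{k-1}\bigr) - G_2 M_2\textstyle\sum_i\lambda_{i,k}v^i_{2,k},
\end{align*}
whose homogeneous part is precisely $\overline A_{k-1}\tilde x_{k-1|k-1}$. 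Finally, applying the measurement update \eqref{eq:stateEst} and using $z_{2,k}-C_2\hat x^\star_{k|k}-\sum_i\lambda_{i,k}D_2^i u_k = C_2(x_k-\hat x^\star_{k|k})+\sum_i\lambda_{i,k}v^i_{2,k}$ gives
\begin{align*}
\tilde x_{k|k}=(I-\tilde L_k C_2)\,\overline A_{k-1}\,\tilde x_{k-1|k-1}+\nu_k,
\end{align*}
where $\nu_k$ is a bounded disturbance built from $w^i$, $v^i_1$ and $v^i_2$, with norm uniformly bounded by a constant depending on $\eta_w,\eta_v$ and the (fixed) system matrices.

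The remaining step, which is the only nontrivial one, is to invoke uniform detectability of $(\overline A_k,C_2)$: by the standard time-varying detectability characterization (e.g., \cite{Anderson.Moore.1981,Peters.Iglesias.1999}), there exists a bounded gain sequence $\{\tilde L_k\}$ such that the transition matrix associated with $(I-\tilde L_k C_2)\overline A_{k-1}$ is uniformly exponentially stable. Applying this to the recursion for $\tilde x_{k|k}$ and combining with $\|\nu_k\|\le\overline\nu<\infty$ shows that $\|\tilde x_{k|k}\|$ stays uniformly bounded, hence $\delta^x_k<\infty$. The boundedness of $\delta^{x,\star}_k$ follows from the display for $x_k-\hat x^\star_{k|k}$, and that of $\delta^d_{k-1}$ from the recursions for $\tilde d_{1,k-1}$ and $\tilde d_{2,k-1}$ together with \eqref{eq:d}. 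The main obstacle is bookkeeping the cancellations that make the $d_{1,k-1}$ and $d_{2,k-1}$ terms disappear from the error dynamics; once these are carried out carefully, the detectability argument is standard.
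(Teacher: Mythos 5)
Your proposal is correct and follows essentially the same route as the paper: you derive the same error recursions (with the cancellations from $M_1\Sigma=I$ and $M_2C_2G_2=I$) showing the homogeneous dynamics are $(I-\tilde L C_2)\overline A_{k-1}$, and then reduce the claim to the standard uniform-detectability stability result for the equivalent time-varying estimation problem with pair $(\overline A_k, C_2)$, exactly as the paper does via its auxiliary system \eqref{eq:equivalentsys} and the Anderson--Moore reference.
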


\subsection{Optimal $\mathcal{H}_\infty$-Observer}\label{sec:Hinf}
In this section, we provide sufficient conditions for the \emph{existence} of a
set-valued observer for system \eqref{eq:mainSystem} with any sequence $\{\lambda_{i,k}\}_{k=0}^\infty$ for all $i \in \{1,2,\dots,N\}$ that satisfies $ 0 \leq \lambda_{i,k} \leq1 , \sum_{i=1}^N \lambda_{i,k} =1, \forall k $ in the sense of $\mathcal{H}_\infty$ (i.e., minimizing the sum of squares of the state estimation error sequence). Furthermore, we introduce a relatively simple approach to find such an observer, which involves solving a semi-definite program with Linear Matrix Inequalities (LMI) as constraints. We will also show that given some structural conditions for the system, the upper bounds of the estimation errors for both states and unknown inputs are guaranteed to converge to steady state.   
\begin{thm}[$\mathcal{H}_\infty$-Observer Design] 
\label{thm:L_gain}	
Suppose 
Lemma \ref{lem:unbiased} holds and there exist matrices $Y$ and $S \succ 0 $ with appropriate dimensions such that $$\begin{bmatrix} S & (\overline{{A}^{i}})^\top( S -  C_2^\top Y^\top )  & 0 &  I \\ * & S & \begin{bmatrix} S - Y C_2 & -Y \end{bmatrix} & 0 \\ * & * & \eta I & 0 \\ * & * & * & \eta I \end{bmatrix} \succ 0 $$ for all $ i \in\left\{1,2,\dots ,N\right\}.$ 
Then, there exists an $ \eta $ performance bounded $\mathcal{H}_\infty$-observer for system \eqref{eq:mainSystem} with any sequence $\{\lambda_{i,k}\}_{k=0}^\infty$ for all $i \in \{1,2,\dots,N\}$ that satisfies $ 0 \leq \lambda_{i,k} \leq1 , \sum_{i=1}^N \lambda_{i,k} =1, \forall k $ when using  $ \tilde{L} = {S}^{-1} Y $, i.e., $\| T_{\tilde{x},w,v} \| \leq \eta^2$, where  $T_{\tilde{x},w,v}$ is the transfer function matrix that maps the noise signals $\sum_{i=1}^N \lambda_{i,k} \begin{bmatrix}  w_{k}^{i \, \top} & v_k^{i \, \top} \end{bmatrix}^T$ to the updated state estimation error $\tilde{x}_{k|k}\triangleq x_k-\hat{x}_{k|k}$.

 Furthermore, the optimal filter gain $ \tilde{L} = {S^ \star} ^ {-1} \tilde{Y} ^ \star $ with $ \eta^\star $ $\mathcal{H}_\infty$-performance can be obtained from the following semi-definite programming with LMI constraints: 
 
\footnotesize \vspace{-0.35cm}
\begin{align}
\nonumber (\eta^\star,S^\star,Y^\star) &\in \argmin_{\eta, S,Y} \hspace{.2 cm} \eta
\\ \nonumber & \hspace*{-0.5cm} s.t \begin{bmatrix}  S & (\overline{{A}^{i}})^\top( S -  C_2^\top Y^\top )  & 0 & I \\ * & S & \begin{bmatrix} S - Y C_2 & -Y \end{bmatrix} & 0 \\ * & * & \eta I & 0 \\ * & * & * & \eta I \end{bmatrix} \succ 0,\\ & \hspace{3.75cm}\forall i \in\left\{1,2,.. ,N\right\}.
\label{eq:LMI}
\end{align}
\normalsize
\end{thm}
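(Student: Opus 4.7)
The plan is to derive the theorem via three standard moves: (i) obtain a closed-form linear recursion for the updated estimation error, (ii) apply the discrete-time bounded real lemma, and (iii) exploit the polytopic LPV structure together with Schur complements to recover the stated LMI.

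First I would combine \eqref{eq:variant1}--\eqref{eq:stateEst} with Lemma~\ref{lem:unbiased} (so that $M_1 = \Sigma^{-1}$, $M_2 = (C_2 G_2)^\dagger$ and $L = \tilde{L} U_2^\top$), and expand $\tilde{x}_{k|k} \triangleq x_k - \hat{x}_{k|k}$. After substituting the time-update and unknown-input-estimation steps and cancelling the known-input and $d_k$ terms (which vanish because of the unbiasedness conditions of Lemma~\ref{lem:unbiased}), a direct computation should reduce the closed-loop error dynamics to
\begin{align*}
\tilde{x}_{k|k} = \textstyle\sum_{i=1}^N \lambda_{i,k-1}\bigl(\overline{A^i}\,\tilde{x}_{k-1|k-1} + \mathcal{W}^i_{k-1}\bigr),
\end{align*}
where $\overline{A^i} = (I - \tilde{L} C_2)(I - G_2 M_2 C_2)(A^i - G_1 M_1 C_1)$ and $\mathcal{W}^i_{k-1}$ is a linear combination of $w^i_{k-1}$, $v^i_{k-1}$ and $v^i_k$ through the gain matrices $I - G_2 M_2 C_2$, $G_1 M_1$, $G_2 M_2$ and $\tilde{L}$. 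The essential feature is that the recursion is affine in $\tilde{L}$ and affine in the $\lambda_{i,k-1}$.

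Next I would apply the discrete-time bounded real lemma. Using a quadratic storage function $V_k = \tilde{x}_{k|k}^\top S\,\tilde{x}_{k|k}$ with $S \succ 0$, the $\ell_2$-gain condition $\|T_{\tilde{x},w,v}\| \le \eta^2$ translates into a matrix inequality in $(S, \tilde{L}, \eta)$ involving the closed-loop matrices $A_{\mathrm{cl}} = \sum_i \lambda_{i,k-1}\overline{A^i}$ and a $B_{\mathrm{cl}}$ built from the noise gains. Introducing the change of variables $Y = S\tilde{L}$ linearises the bilinearity in $S$ and $\tilde{L}$: the factor $(I - \tilde{L} C_2)$ produces the block $S - Y C_2$, while the direct $\tilde{L}$-weighted noise channel produces $-Y$, yielding precisely the $[\,S - Y C_2\ \ -Y\,]$ block in the statement. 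One or two Schur complements applied to absorb the $A_{\mathrm{cl}}^\top S A_{\mathrm{cl}}$-type term and the noise-to-output coupling should then give exactly the $4\times 4$ block LMI in the theorem, with the two $\eta I$ diagonal blocks corresponding to the two noise channels.

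The polytopic LPV step is immediate: because the resulting inequality is affine in $\overline{A^i}$ and in the noise gain blocks for fixed $(S,Y,\eta)$, and both depend linearly on the convex parameters $\lambda_{i,k-1}$, it suffices that the \emph{vertex} inequalities, namely the ones stated for each constituent $i \in \{1,\dots,N\}$, all hold; taking the convex combination with weights $\lambda_{i,k-1}$ recovers the full time-varying LMI, certifying $\|T_{\tilde{x},w,v}\| \le \eta^2$ for every admissible sequence $\{\lambda_{i,k}\}$. Since all constraints are jointly linear in $S \succ 0$, $Y$ and $\eta$, minimising $\eta$ is a genuine SDP, giving the optimal gain $\tilde{L}^\star = (S^\star)^{-1} Y^\star$. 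The main obstacle is the first step: carefully bookkeeping $\overline{A^i}$ and the noise channels so that the closed-loop dynamics are verifiably affine in both $\tilde{L}$ and the $\lambda_{i,k}$, because only then does the vertex-wise LMI argument cover all admissible LPV realisations. The bounded real lemma, Schur complement manipulations, and SDP reformulation are routine once the error dynamics are in this clean polytopic form.
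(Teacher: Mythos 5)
Your proposal follows essentially the same route as the paper: derive the closed-loop error recursion $\tilde{x}_{k|k}=(I-\tilde{L}C_2)\overline{A}_{k-1}\tilde{x}_{k-1|k-1}+(I-\tilde{L}C_2)\overline{w}_{k-1}-\tilde{L}\overline{v}_{k-1}$ (affine in $\tilde{L}$ and in the $\lambda_{i,k}$), impose an $\mathcal{H}_\infty$ bound with a single parameter-independent Lyapunov/storage matrix, and linearise via $Y=S\tilde{L}$ with a vertex-wise LMI covering all admissible $\{\lambda_{i,k}\}$ by convexity. The only cosmetic difference is that the paper invokes a cited bounded-real-type lemma (with constant $P$) and then performs a congruence transformation with $S=P^{-1}$, whereas you re-derive that step via a dissipation argument and Schur complements; the substance is identical and your proposal is correct.
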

 Although Theorem \ref{thm:L_gain} equips us with an approach for designing an $\mathcal{H}_\infty$-observer for the linear parameter-varying system in \eqref{eq:mainSystem} when one exists, it would still be valuable to find a \emph{structural} and conveniently testable property for the constituent linear time-invariant systems in \eqref{eq:indSystem} that is \emph{necessary} for the existence of such an observer. Knowing such conditions would be beneficial in the sense that if they are \emph{not} satisfied, the designer knows \emph{a priori} that there does not exist any $\mathcal{H}_\infty$-observer for such an attacked system.  This will be the goal of Theorem \ref{thm:Hinfsufficient}.
 \begin{thm}[Necessary Conditions for the Existence of an $\mathcal{H}_\infty$-observer]
  \label{thm:Hinfsufficient}
  There exists a simultaneous state and unknown input $\mathcal{H}_\infty$-observer for system \eqref{eq:mainSystem} with any sequence $\{\lambda_{i,k}\}_{k=0}^\infty$ for all $i \in \{1,2,\dots,N\}$ that satisfies $ 0 \leq \lambda_{i,k} \leq1 , \sum_{i=1}^N \lambda_{i,k} =1, \forall k $, only if each $ ({A}^{i},G,{C},H) $ is strongly detectable\footnote{For brevity, the readers may refer to \cite{YongACC2018} for the definition of strong detectability.} for all $ i \in\left\{1,2,\dots ,N\right\} $.
 \end{thm}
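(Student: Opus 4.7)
The plan is to prove the contrapositive via a simple reduction argument: if one of the constituent LTI systems fails to be strongly detectable, then no LPV $\mathcal{H}_\infty$-observer of the prescribed form can exist, because a particular admissible choice of the scheduling sequence collapses the LPV dynamics onto that one constituent.

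First, I would fix an arbitrary index $j \in \{1,\dots,N\}$ and consider the constant scheduling sequence defined by $\lambda_{j,k} = 1$ and $\lambda_{i,k} = 0$ for $i \neq j$, for every $k$. This sequence clearly satisfies the admissibility constraints $0 \le \lambda_{i,k} \le 1$ and $\sum_{i} \lambda_{i,k} = 1$. Substituting it into \eqref{eq:mainSystem} shows that the LPV dynamics reduce \emph{exactly} to the $j$-th constituent LTI system \eqref{eq:indSystem}, with driving signals $w_k^j$, $v_k^j$, $u_k$ and unknown input $d_k$. Since Theorem \ref{thm:L_gain}'s observer is, by hypothesis, a valid simultaneous $\mathcal{H}_\infty$-observer for the LPV system \emph{for every} admissible $\{\lambda_{i,k}\}$, it is in particular valid for this constant sequence, and therefore its specialization yields a simultaneous input/state $\mathcal{H}_\infty$-observer of precisely the three-step form in \eqref{eq:variant1}--\eqref{eq:stateEst} for the LTI quadruple $(A^j, G, C, H)$.

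Second, I would invoke the previously established LTI result from \cite{YongACC2018}, which shows that strong detectability of $(A^j,G,C,H)$ is a necessary condition for the existence of such an $\mathcal{H}_\infty$-observer for a single LTI bounded-error system with unknown inputs. Combining this with the reduction step forces $(A^j,G,C,H)$ to be strongly detectable. Because $j$ was arbitrary, the condition must hold for every $i \in \{1,\dots,N\}$, which is what the theorem asserts.

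The only subtlety, and the main place to be careful, will be verifying that the LPV observer structure \eqref{eq:variant1}--\eqref{eq:stateEst}, once evaluated at the degenerate schedule $\lambda_{j,k}\equiv 1$, really does meet the definition of an LTI $\mathcal{H}_\infty$-observer used in \cite{YongACC2018} — in particular, that the unknown-input-decoupling conditions of Lemma \ref{lem:unbiased} specialize correctly (since $G$, $H$, $C$, and the matrices $V_1,V_2,U_1,U_2,\Sigma$ are independent of $i$, and only $A^i$, $B^i$, $D^i$ and the noise vary with $i$, this specialization is immediate). Once that identification is made, the necessity result from \cite{YongACC2018} applies verbatim and the theorem follows. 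No computation beyond writing out the specialized recursion should be required.
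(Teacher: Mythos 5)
Your proposal is correct and follows essentially the same route as the paper: specialize to the degenerate admissible schedule $\lambda_{j,k}\equiv 1$ so the LPV system collapses to the $j$-th constituent LTI system, then invoke the necessity of strong detectability for an LTI $\mathcal{H}_\infty$-observer from \cite{YongACC2018}. The paper phrases it as a proof by contradiction rather than a contrapositive, but the argument is the same.
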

 
 Next, we characterize the resulting radii $\delta^x_{k}$ and $\delta^d_{k-1}$ when using the proposed $\mathcal{H}_\infty$-observer.
 
 \begin{thm}[Radii of Set-Valued Estimates]\label{thm:error_bound}
	The radii $\delta^x_{k}$ and $\delta^d_{k-1}$ can be obtained as:
	
	\footnotesize \vspace{-0.35cm}
	\begin{align*}
	&\delta^x_{k}=\delta^x_0 \theta^k +  \overline{\eta} \textstyle\sum_{i=1}^k \theta^{i-1},\\ 
	 &\delta^d_{k-1} \hspace{-0.1cm}=\hspace{-0.1cm} \beta \delta^x_{k-1}\hspace{-0.1cm}+\hspace{-0.1cm} \| V_2 M_2 C_2 \| \eta_w
\hspace{-0.1cm} +\hspace{-0.1cm}\big{[}\hspace{-0.05cm}\|\hspace{-0.05cm}(\hspace{-0.05cm}V_2 M_2 C_2G_1\hspace{-0.1cm}-\hspace{-0.1cm}V_1\hspace{-0.05cm})\hspace{-0.05cm}M_1T_1\hspace{-0.05cm}\|\hspace{-0.1cm}+\hspace{-0.1cm}\|\hspace{-0.05cm}V_2M_2T_2\hspace{-0.05cm}\| \hspace{-0.05cm}\big{]} \eta_v,
	\end{align*}
	\normalsize
	where $\beta \triangleq \textstyle\max_{i\in \{1,2,\dots,N\}} \|V_1M_1C_1+ V_2 M_2 C_2 A_{e,i}\|,
	\Psi \triangleq I-\tilde{L}C_2, \Phi \triangleq I-G_2 M_2 C_2,  
	A_{e,i} \triangleq \Psi \Phi (A^i-G_1M_1C_1),
	 \\ \theta \triangleq \max_{i \in \{1,2,\dots ,N\}} \| A_{e,i} \|$.
\end{thm}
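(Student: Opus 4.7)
My plan is to derive one-step recursions for both the state and input estimation errors, bound each via the triangle inequality, and close the resulting scalar inequality for the state error into a first-order linear recurrence. I would begin with $\tilde{x}_{k|k}$: substituting \eqref{eq:stateEst} together with the decoupled output equation \eqref{eq:sysY} for $z_{2,k}$ into $\tilde{x}_{k|k}=x_k-\hat{x}_{k|k}$ gives immediately $\tilde{x}_{k|k} = \Psi\,\tilde{x}^\star_{k|k} - \tilde{L}\sum_{i=1}^N \lambda_{i,k} v^i_{2,k}$, where $\tilde{x}^\star_{k|k}\triangleq x_k-\hat{x}^\star_{k|k}$ and $\Psi = I-\tilde{L}C_2$.

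Next, I would expand $\tilde{x}^\star_{k|k}$ using the time update \eqref{eq:time}--\eqref{eq:xstar} together with the closed-form expressions for $\tilde{d}_{1,k-1}$ and $\tilde{d}_{2,k-1}$ obtained by substituting the true dynamics into \eqref{eq:variant1} and \eqref{eq:d2}. The identities $M_1\Sigma=I$ and $M_2 C_2 G_2 = I$ from Lemma \ref{lem:unbiased} cancel the true unknown-input terms, leaving each $\tilde{d}_{j,k-1}$ affine in $\tilde{x}_{k-1|k-1}$ and the noises. Collecting terms in $\tilde{x}^\star_{k|k}$, the $G_2\tilde{d}_{2,k-1}$ contribution factors out a premultiplying $\Phi=I-G_2 M_2 C_2$, the $G_1\tilde{d}_{1,k-1}$ contribution merges each $A^i$ into $A^i-G_1 M_1 C_1$, and the outer $\Psi$ from the measurement update yields the clean recursion $\tilde{x}_{k|k} = \sum_{i=1}^N \lambda_{i,k-1}A_{e,i}\,\tilde{x}_{k-1|k-1} + \mathcal{N}_k$, where $\mathcal{N}_k$ is a convex combination of $w^i_{k-1}$, $v^i_{1,k-1}$, and $v^i_{2,k}$ weighted by the matrices $\Psi\Phi$, $-\Psi\Phi G_1 M_1$, and $-(\Psi G_2 M_2+\tilde{L})$, respectively. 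Applying the triangle inequality, convexity of $\{\lambda_{i,k-1}\}$, and the bounds $\|w^i\|\leq\eta_w$, $\|T_j v^i\|\leq\eta_v$ then gives $\|\tilde{x}_{k|k}\|\leq \theta\,\|\tilde{x}_{k-1|k-1}\| + \overline{\eta}$, where $\overline{\eta}$ collects the noise-coefficient norms. Iterating this scalar recurrence from $\|\tilde{x}_0\|\leq\delta^x_0$ produces the closed-form $\delta^x_k=\delta^x_0\theta^k+\overline{\eta}\sum_{i=1}^k \theta^{i-1}$.

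For the input-error bound I would assemble $\tilde{d}_{k-1}=V_1\tilde{d}_{1,k-1}+V_2\tilde{d}_{2,k-1}$ directly from the two already-derived expressions and group terms. The coefficient of $\tilde{x}_{k-1|k-1}$ in the resulting linear combination is a $\lambda$-convex combination of constituent matrices whose induced 2-norm is therefore bounded uniformly by $\beta$; the coefficients of $w^i_{k-1}$, $v^i_{1,k-1}=T_1 v^i_{k-1}$, and $v^i_{2,k}=T_2 v^i_k$, after bounding by $\eta_w$ and $\eta_v$, produce exactly the three additive terms $\|V_2 M_2 C_2\|\eta_w$, $\|(V_2 M_2 C_2 G_1 - V_1)M_1 T_1\|\eta_v$, and $\|V_2 M_2 T_2\|\eta_v$ stated in the theorem.

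The main obstacle is the purely algebraic bookkeeping in the state-error derivation: after the serial substitutions of $\tilde{d}_{1,k-1}$, $\tilde{d}_{2,k-1}$, and $\tilde{x}^\star_{k|k}$, one must verify that the coefficient of $\tilde{x}_{k-1|k-1}$ collapses exactly to $\Psi\Phi(A^i-G_1 M_1 C_1)=A_{e,i}$ and that no cross-coupling survives between the $d_1$ and $d_2$ channels. Once the Lemma \ref{lem:unbiased} identities are applied consistently, this factoring is essentially automatic, and the remainder of the argument reduces to the triangle inequality and a standard first-order linear recurrence solution.
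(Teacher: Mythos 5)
Your proposal is correct and takes essentially the same route as the paper: you derive the same one-step error recursions from the identities of Lemma \ref{lem:unbiased} (obtaining the coefficient $\sum_{i}\lambda_{i,k-1}A_{e,i}$ for the state error and assembling $\tilde{d}_{k-1}=V_1\tilde{d}_{1,k-1}+V_2\tilde{d}_{2,k-1}$ for the input error) and then bound them using the triangle inequality, convexity of the $\lambda_{i,k}$, and submultiplicativity of norms. The only organizational difference is that the paper first unrolls the vector recursion into the explicit closed form of Lemma \ref{lem:error_closedform} (proved by induction) before taking norms, whereas you iterate the scalar recurrence $\|\tilde{x}_{k|k}\|\le\theta\|\tilde{x}_{k-1|k-1}\|+\overline{\eta}$ directly; the resulting radii are identical.
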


The resulting fixed-order set-valued observer is summarized in Algorithm \ref{algorithm1}.
 
So far, we have designed an $\mathcal{H}_\infty$-observer for our linear parameter-varying system and provided necessary conditions for the boundedness of the set-valued estimates.
 It is worth mentioning that 
 for the linear time-invariant case in \cite{YongACC2018},
   strong detectability of the system is also a sufficient condition for the convergence of the radii $\delta^x_{k}$ and $\delta^d_{k-1}$ to steady state. In our parameter-varying case, even if all constituent linear time-invariant systems are strongly detectable, there is no guarantee that the radii converge. The reason is that the convergence hinges on the stability of the product of \emph{time-varying} matrices (cf. proof of Theorem \ref{thm:error_convergence}), which is not guaranteed even if all the multiplicands are stable. 
 In the next theorem, 
 we discuss some sufficient conditions for the convergence of the radii to steady state.  
\begin{thm}[Convergence]
\label{thm:error_convergence}
	Suppose the conditions of Theorem \ref{thm:L_gain} hold. Then, the radii $\delta^x_{k}$ and $\delta^d_{k-1}$ are convergent if $\|A_{e,i}\|<1$ for all $i \in\left\{1,2,\dots ,N\right\}$, where $A_{e,i}$ is defined in Theorem \eqref{thm:error_bound}. Moreover, the steady state radii is given by:
	  
	  \small \vspace{-0.35cm}
 \begin{align*}
\begin{array}{l}
\displaystyle\lim_{k \to \infty} \delta^x_k  =\frac{ \overline{\eta}}{1-\theta}, \\
\displaystyle \lim_{k \to \infty} \delta^d_{k}=\frac{ \overline{\eta} \beta}{1-\theta}+\eta_w\| V_2 M_2 C_2 \| +\eta_v(\| V_2M_2T_2 \| + \| R \|),      
 \end{array}
\end{align*}
\normalsize
where 
$\overline{\eta} \triangleq (\| \Gamma \| \eta_v + \| \Psi \Phi \| \eta_w),   
R \triangleq V_2M_2C_2G_1M_1T_1-V_1 M_1 T_1, 
\Gamma \triangleq -(\Psi \Phi G_1 M_1 T_1 +\Psi G_2 M_2 T_2 + \tilde{L} T_2)$.  
\normalsize
\end{thm}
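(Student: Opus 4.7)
The plan is to exploit the explicit recursive bounds derived in Theorem~\ref{thm:error_bound} and reduce the convergence question to an elementary geometric-series argument. Recall that Theorem~\ref{thm:error_bound} gives $\delta^x_k = \delta^x_0 \theta^k + \overline{\eta}\sum_{i=1}^{k}\theta^{i-1}$ with $\theta = \max_{i} \|A_{e,i}\|$, and expresses $\delta^d_{k-1}$ as an affine function of $\delta^x_{k-1}$ whose coefficients are independent of $k$. Thus, once the behavior of $\delta^x_k$ is settled, the behavior of $\delta^d_{k-1}$ follows immediately.

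First I would observe that the standing hypothesis $\|A_{e,i}\| < 1$ for every $i \in \{1,\dots,N\}$ forces $\theta < 1$, since $\theta$ is the maximum of finitely many sub-unit numbers. This is the only place where the stronger contraction assumption enters, and it is precisely what allows us to sidestep the general difficulty flagged in the paragraph preceding the theorem: a product of individually stable time-varying matrices need not itself be stable, but a uniform sub-unit bound on the induced $2$-norm together with submultiplicativity yields the uniform geometric bound $\theta^k$ for \emph{every} admissible parameter sequence $\{\lambda_{i,k}\}$. This bound is what Theorem~\ref{thm:error_bound} already exploits to obtain the closed-form expression for $\delta^x_k$.

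With $\theta < 1$ established, I would let $k \to \infty$ in the formula for $\delta^x_k$. The first term $\delta^x_0\,\theta^k$ decays to zero, and the partial sum $\sum_{i=1}^{k}\theta^{i-1}$ is a convergent geometric series with limit $(1-\theta)^{-1}$, giving $\lim_{k\to\infty}\delta^x_k = \overline{\eta}/(1-\theta)$ as claimed. Substituting this limit into the bound for $\delta^d_{k-1}$ from Theorem~\ref{thm:error_bound} produces $\beta\overline{\eta}/(1-\theta) + \|V_2 M_2 C_2\|\eta_w + \bigl(\|(V_2 M_2 C_2 G_1 - V_1)M_1 T_1\| + \|V_2 M_2 T_2\|\bigr)\eta_v$, and a one-line algebraic identity $(V_2 M_2 C_2 G_1 - V_1) M_1 T_1 = V_2 M_2 C_2 G_1 M_1 T_1 - V_1 M_1 T_1 = R$ matches this with the notation of the theorem.

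Since the heavy lifting -- propagating the error recursion through the time-varying product of the $A_{e,i}$ -- has already been done in Theorem~\ref{thm:error_bound}, the remainder is essentially bookkeeping, and I do not anticipate a substantive obstacle. The one point I would make explicit in the writeup is the reduction from a time-varying matrix product to the uniform scalar bound $\theta^k$, so that the reader understands why the induced-$2$-norm contraction $\|A_{e,i}\| < 1$ is the natural sufficient condition here, rather than weaker spectral-radius assumptions on the individual constituent matrices.
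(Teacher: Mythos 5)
Your proposal is correct and follows essentially the same route as the paper's proof: deduce $\theta = \max_i \|A_{e,i}\| < 1$ from the hypothesis, let $\theta^k \to 0$ and sum the geometric series in the expression for $\delta^x_k$ from Theorem \ref{thm:error_bound} to get $\overline{\eta}/(1-\theta)$, then substitute this limit into the affine bound for $\delta^d_{k-1}$, identifying $(V_2 M_2 C_2 G_1 - V_1)M_1 T_1$ with $R$. No gaps.
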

\begin{rem}
Alternatively, we can trade off between ``optimality" of the observer and ``convergence" of the radii. We can iteratively find $\eta$ (e.g., by line search) that satisfies the following feasibility problem:

\footnotesize \vspace{-0.35cm}
\begin{align*}
\begin{array}{rl}
\nonumber & {\rm Find} \hspace{.2 cm} (S,Y)
\\ & s.t \begin{bmatrix}  S & * & 0 & I \\ ( S - Y C_2 ) \overline{A}^{i} & S & \begin{bmatrix} S - Y C_2 & -Y \end{bmatrix} & 0 \\ * & * & \eta_0 I & 0 \\ * & * & * & \eta_0 I \end{bmatrix} \succ 0, \forall i \in\left\{1,2,.. ,N\right\},
\end{array}
\end{align*}
\normalsize 
as well as the sufficient condition in Theorem \ref{thm:error_convergence}, i.e., $\|A_{e,i}\|<1$ for all $i \in\left\{1,2,\dots ,N\right\}$.
%
%
 Although the designed observer may not be optimum in minimum $\mathcal{H}_\infty$ sense when using this alternative method, we can guarantee the steady state convergence of the radii instead. 
\end{rem}
\begin{algorithm}[t] \small
	\caption{Fixed-Order Input \& State Set-Valued Observer 
	}\label{algorithm1}
		\begin{algorithmic}[1]
		\State Initialize: $M_{1}=\Sigma^{-1}$; $M_{2}=(C_2 G_{2})^\dagger$; 
		\Statex \hspace{1.25cm} $\Phi=I-G_2 M_2 C_2$; 
		\Statex \hspace{1.25cm} 
		Compute $\tilde{L}$ via Theorem \ref{thm:L_gain}; 
		\Statex \hspace{1.25cm}  $\Psi=I-\tilde{L}C_2;$ 
		\Statex \hspace{1.25cm}   $\theta \triangleq \max_{i \in \{1,2,\dots ,N\}} \| \Psi \Phi (A^i-G_1M_1C_1) \|$; 
		\Statex \hspace{1.25cm} $\hat{x}_{0|0}=\hat{x}_0=\text{centroid}(\hat{X}_0)$; 
		\Statex \hspace{1.25cm} $\delta^x_0=\min\limits_{\delta} \{\|x-\hat{x}_{0|0}\| \leq \delta, \forall x \in \hat{X}_0\}$; 
		\Statex \hspace{1.25cm} $\hat{d}_{1,0}=M_1 (z_{1,0}-C_{1} \hat{x}_{0|0}-D_{1} u_0)$;
		\For {$k =1$ to $K$}
    	\LineComment{Estimation of $d_{2,k-1}$ and $d_{k-1}$}
		\State $\hat{x}_{k|k-1}= \sum_{i=1}^N\lambda_{i,k} {A}^{i} \hat{x}_{k-1 | k-1}+ \sum_{i=1}^N\lambda_{i,k} {B}^{i} u_{k-1}$ 
		\Statex \hspace{1.5cm}
		$ + G_{1} \hat{d}_{1,k-1};$
		\State  $\hat{d}_{2,k-1}=M_{2} (z_{2,k}-{C}_{2}\hat{x}_{k|k-1} -\sum_{i=1}^N\lambda_{i,k}{D}^{i}_{2} u_k);$
		\State $\hat{d}_{k-1} =V_{1} \hat{d}_{1,k-1} + V_{2} \hat{d}_{2,k-1}$; 
		\State $\delta^d_{k-1}=\delta_{k-1}^x\|V_1 M_1 C_1 + V_2 M_2 C_2 \hat{A}_k\| $
		\Statex \hspace{1.25cm}
		$+ \eta_v(\|(V_2 M_2 C_2G_1-V_1)M_1T_1\|+\|V_2M_2T_2\|)  $
		\Statex \hspace{1.25cm}
		$+ \eta_w\| V_2 M_2 C_2 \|;$
		\State $\hat{D}_{k-1}=\{d \in \mathbb{R}^l : \|d-\hat{d}_{k-1}\| \leq \delta^d_{k-1}\}$;
		\LineComment{Time update}
		\State
		$\hat{x}^\star_{k|k}=\hat{x}_{k|k-1}+G_{2} \hat{d}_{2,k-1}$;
		\LineComment{Measurement update}
		\State 
		$\hat{x}_{k|k}=\hat{x}^\star_{k|k}+\tilde{L}(z_{2,k}-C_{2} \hat{x}^\star_{k|k}- \sum_{i=1}^N \lambda_{i,k} D^i_{2} u_k)$;
		\State 
		$\delta^x_k=\delta^x_0 \theta^k +  \overline{\eta} \textstyle\sum_{i=1}^k \theta^{i-1}$; 
		\State  $\hat{X}_{k}=\{x \in \mathbb{R}^n : \|x-\hat{x}_{k|k}\| \leq \delta^x_{k}\}$;
		\LineComment{Estimation of $d_{1,k}$}
		\State $\hat{d}_{1,k}=M_{1} (z_{1,k}-C_{1} \hat{x}_{k|k}- \sum_{i=1}^N D^i_{1} u_k)$;
		\EndFor

	\end{algorithmic}
\end{algorithm}   

\vspace{-0.09cm}

\section{Simulation Example} \label{sec:examples}
In this section, we consider a convex combination of two constituent linear time-invariant strongly detectable subsystems that have been used in the literature as a benchmark for some state and input filters (e.g., \cite{chen2005observer}):

\small 
\vspace{-0.35cm}
\begin{align*}
&A^1 =\hspace{-0.05cm} \begin{bmatrix} 0.9 & .5 \\ -0.3 & 1 \end{bmatrix}\hspace{-0.1cm};
A^2 =\hspace{-0.05cm} \begin{bmatrix} 0.85 & .55 \\ -0.35 & 1 \end{bmatrix}\hspace{-0.1cm}; 
C=\hspace{-0.05cm} \begin{bmatrix} 1 & .2 \\ 1.1 & 1.9 \end{bmatrix}\hspace{-0.1cm};\\
&G =\hspace{-0.05cm} \begin{bmatrix} -0.02 & 0.04 \\ 0.01 & -0.05\end{bmatrix}\hspace{-0.1cm}; H=\hspace{-0.05cm}\begin{bmatrix} 1.1 & 2 \\ 2.2 & 4 \end{bmatrix}\hspace{-0.1cm};
B^1 =B^2= I_{2 \times 2};D = 0_{2 \times 2}.
\end{align*} 
\normalsize
The unknown inputs used in this example are as given in Figure \ref{fig:inputs}, while the initial state estimate and noise signals (drawn uniformly) have bounds $\delta^x_{0}=0.5$, $\eta_w=0.02$ and $\eta_v= 10^{-4}$. We also picked uniformly random coefficients, $\lambda_{i,k}$, that satisfies $ 0 \leq \lambda_{i,k} \leq1 , \sum_{i=1}^N \lambda_{i,k} =1, \forall k $.
Based on the results of Theorem \ref{thm:L_gain} and by solving the corresponding semi-definite programming problem using YALMIP \cite{Lofberg2004} and MOSEK \cite{mosek} as the solver, 
we find $S^\star=\begin{bmatrix} 0.2745 & 0.1933 \\
0.1933 & 0.4200 \end{bmatrix} $, $Y^\star= \begin{bmatrix} 0.0010 \\ 0.1613 \end{bmatrix} $ and the $\mathcal{H}_\infty$-observer gain as $ \tilde{L}=S^{\star \, -1}Y^\star= \begin{bmatrix} -0.3946 \\
0.5656 \end{bmatrix}$. Then, applying Algorithm \ref{algorithm1}, we summarized the set-valued state and unknown input results in Figures \ref{fig:inputs} and \ref{fig:variances}. The radii are observed to be convergent to steady state in Figure \ref{fig:variances}.     
\vspace{-0.1cm}
\begin{figure}[!h]
	\begin{center}
		\includegraphics[scale=0.385,trim=18mm 40mm 10mm 10mm,clip]{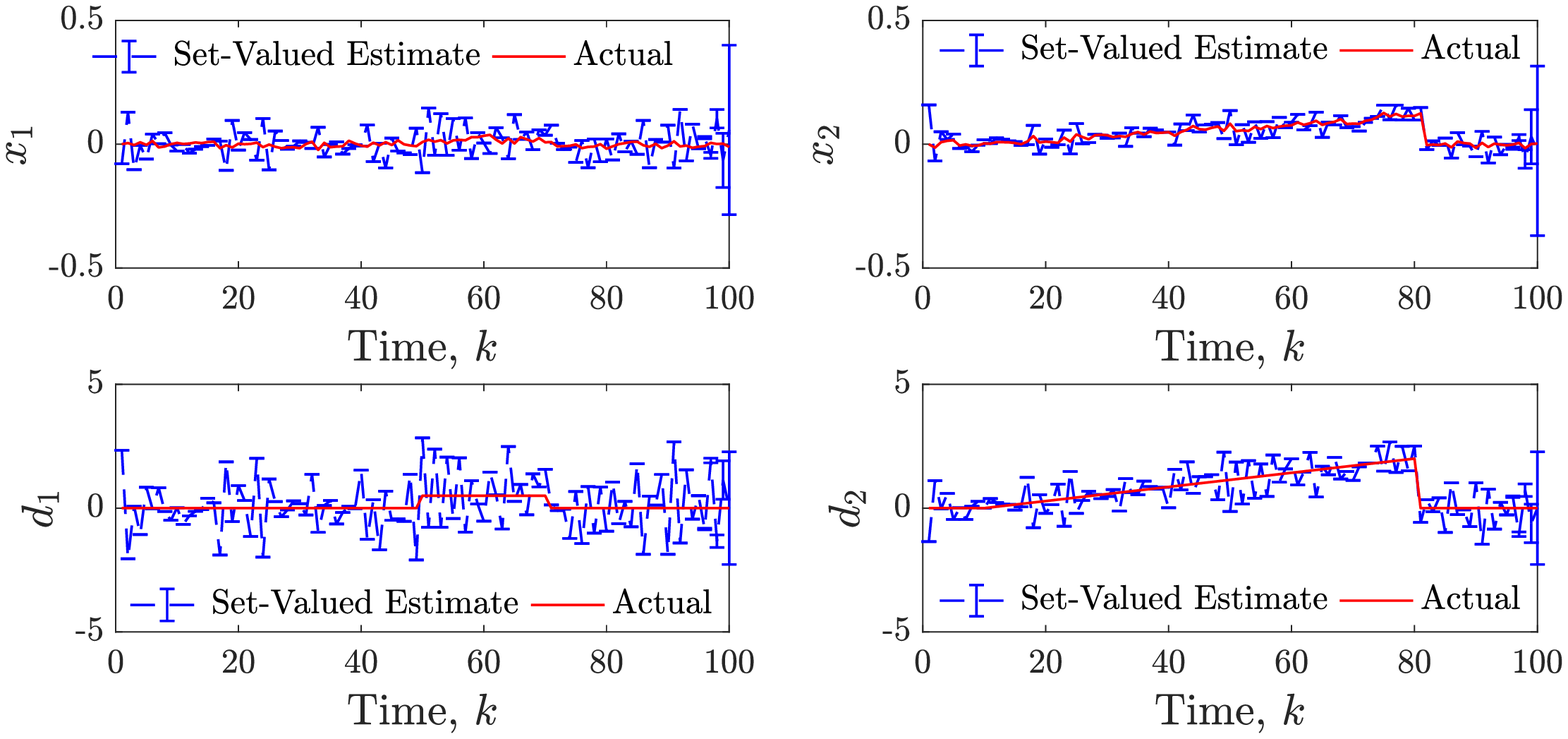}\vspace{-3.05cm}
		\caption{Actual states $x_1$, $x_2$ and their estimates, as well as unknown inputs $d_1$ and $d_2$ and their estimates.\label{fig:inputs} }
		\includegraphics[scale=0.37,trim=20mm 70mm 10mm 2mm,clip]{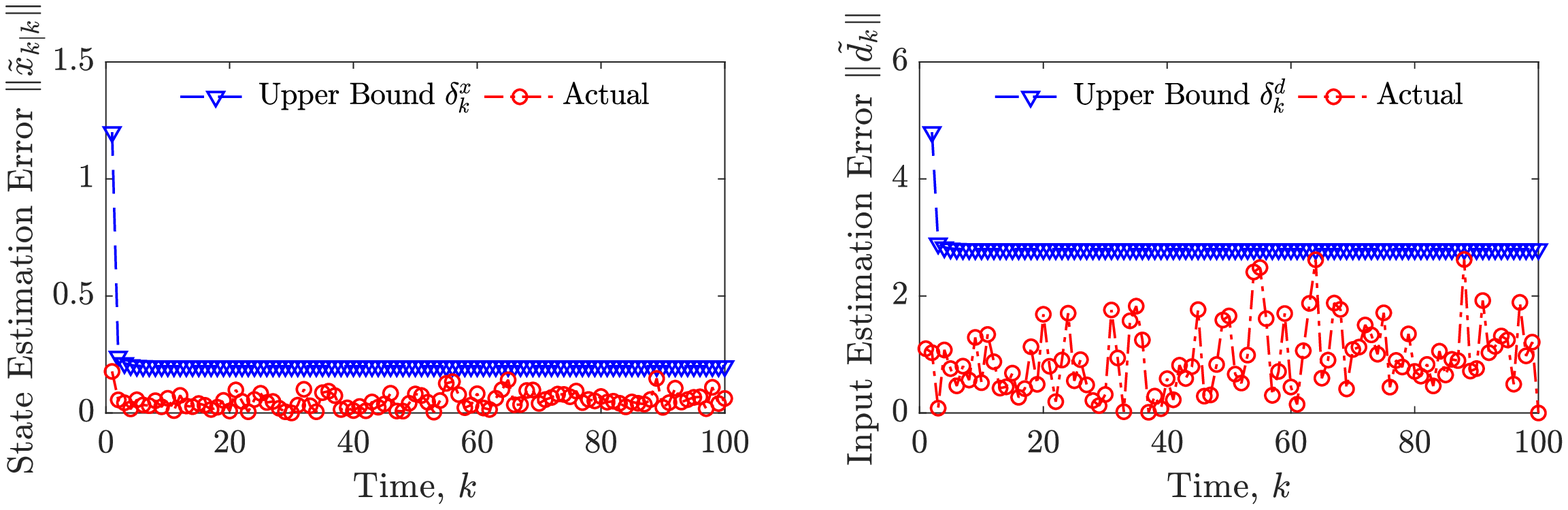}\vspace{-0.25cm}
		\caption{Actual estimation errors and radii of set-valued estimates of states, \hspace{-0.1cm}$\|\tilde{x}_{k|k}\|$, \hspace{-0.075cm}$\delta^x_k$, and unknown inputs, \hspace{-0.1cm}$\|\tilde{d}_{k}\|$, \hspace{-0.075cm}$\delta^d_k$. \hspace{-0.45cm} \label{fig:variances}}
	\end{center}
	\vspace{-0.425cm}
\end{figure}
\vspace{-0.1cm}
\section{Conclusion} \label{sec:conclusion}
\vspace{-0.05cm}
 We presented a fixed-order set-valued $\mathcal{H}_\infty$-observer for linear parameter-varying bounded-error discrete-time dynamic systems, which can be expressed as a convex combination of strongly detectable linear time-invariant constituent systems. 
  We provided sufficient conditions for the optimality of the designed observer, which can be obtained from a semi-definite programming problem with LMI constraints. We also showed that the strong detectability of the constituent linear time-invariant systems 
  is necessary 
  for the existence and stability of such an observer and for the boundedness of the set-valued estimates. In addition, we came up with sufficient structural conditions for the convergence of the radii of the set-valued state and input estimates and derived the steady state radii. Finally, we 
demonstrated the effectiveness of our proposed approach using an illustrative example.

\tiny
\bibliographystyle{unsrturl}
\bibliography{biblio}
\normalsize
\vspace{-0.3cm}
\section*{Appendix: Proofs} 
\label{subsec:thmproof} 
\subsection{Proof of Lemma \ref{lem:uniform detecatbility}}
 \eqref{eq:sysY}-\eqref{eq:time} and plugging  $ M_1=\Sigma^{-1} $ into \eqref{eq:estimatedinput1} imply that  
\begin{small}
\begin{align}
\hat{d}_{1,k}&= M_1 ( {C}_{1} \tilde{x}_{k|k} + \Sigma d_{1,k} + \textstyle{\sum}_{i=1}^N \lambda_{i,k} {v}^{i}_{1,k}), \label{eq:estimatedinput1}
\\ \hat{d}_{2,k-1}&\hspace{-0.1cm}=\hspace{-0.1cm} M_2 (C_2(\textstyle{\sum}_{i=1}^N \lambda_{i,k-1} A^i \tilde{x}_{k-1|k-1} \hspace{-0.1cm}+\hspace{-0.1cm} G_1 \tilde{d}_{1,k-1}\label{eq:estimatedinput2} \\ \nonumber &\ + G_2 d_{2,k-1} \hspace{-0.05cm}+\hspace{-0.1cm} \textstyle{\sum}_{i=1}^N \lambda_{i,k-1} w^i_{k-1}) \hspace{-0.05cm}+\hspace{-0.1cm} \textstyle{\sum}_{i=1}^N \lambda_{i,k} v^i_{2,k}). 
\\ \label{eq:inputerror1}  \tilde{d}_{1,k} &= d_{1,k} - \hat{d}_{1,k} = - M_{1} ( {C}_{1} \tilde{x}_{k|k} + \textstyle{\sum}_{i=1}^N \lambda_{i,k}{v}^{i}_{1,k} ).
\end{align}
\end{small}
 \eqref{eq:inputerror1} and setting $ M_{2}=( C_{2} G_{2} )^ \dagger $ (Lemma \ref{lem:unbiased}) in \eqref{eq:estimatedinput2}, 
return
 \small
 \begin{align} \label{eq:Inputerror2}
  \nonumber &\tilde{d}_{2,k-1}= -M_{2} ( C_{2} \hat{A}_{k-1} \tilde{x}_{k-1|k-1} - C_{2} G_{1} M_{1} \textstyle{\sum}_{i=1}^N \lambda_{i,k-1} {v}^{i}_{1,k-1} 
 \\&+\textstyle C_{2} \textstyle{\sum}_{i=1}^N \lambda_{i,k-1} {w}^{i}_{k-1} + \textstyle{\sum}_{i=1}^N \lambda_{i,k} {v}^{i}_{2,k}).  
\end{align}
\normalsize
Defining $ \tilde{x}^\star_{k|k} \triangleq x_k-\hat{x}^ \star_{k|k} $, from \eqref{eq:mainSystem}, \eqref{eq:time} and \eqref{eq:xstar} we obtain

\vspace{-0.3cm}
\begin{small}
\begin{align}
  \hspace{-0.05cm} \tilde{x}^\star_{k|k}\hspace{-0.1cm} =\hspace{-0.15cm} \textstyle{\sum}_{i=1}^N \hspace{-0.05cm}\lambda_{i,k-1}\hspace{-0.05cm} (\hspace{-0.05cm}{A}^{i} \tilde{x}_{k-1|k-1}\hspace{-0.1cm}+\hspace{-0.1cm}{w}^{i}_{k-1}\hspace{-0.05cm}) \hspace{-0.1cm}+\hspace{-0.1cm} G_{1}\hspace{-0.05cm} \tilde{d}_{1,k-1} 
   \hspace{-0.15cm}+\hspace{-0.1cm}\textstyle G_{2} \tilde{d}_{2,k-1}   \label{eq:staterrorstar} 
\end{align}
\end{small}

\vspace{-0.4cm}
In addition, from \eqref{eq:sysY} and \eqref{eq:stateEst} and \eqref{eq:inputerror1}-\eqref{eq:staterrorstar} we conclude:
\begin{align}
\tilde{x}_{k|k} 
&= ( I - \tilde{L} C_{2} ) \tilde{x}^\star_{k|k} - \tilde{L} \textstyle{\sum}_{i=1}^N \lambda_{i,k} {v}^{i}_{2,k}. \label{eq:staterror} 
\\ \tilde{x}^\star_{k|k}&= \overline{A}_{k-1} \tilde{x}_{k-1|k-1}-(I-G_2 M_2 C_2) (G_1 M_1  \label{eq:staterrorStar}\\ \nonumber &\textstyle\sum_{i=1}^N\lambda_{i,k-1} (v^i_{1,k-1}- w^i_{k-1})) -G_2 M_2 \sum_{i=1}^N \lambda_{i,k} v^i_{2,k}. 
\end{align}
 Now, we define
 \begin{small}
  $\overline{w}_{k-1} \hspace{-0.2cm}\triangleq -G_2 M_2 \textstyle \small{\sum_{i=1}^N} \lambda_{i,k} {v}^{i}_{2,k} - ( I - G_2 M_2 C_2 )
  ( G_1 M_1 \textstyle \small{\sum_{i=1}^N} \lambda_{i,k-1} ({v}^{i}_{1,k-1}-{w}^{i}_{k-1})) , 
     \overline{v}_{k-1} \triangleq \textstyle \sum_{i=1}^N \lambda_{i,k} {v}^{i}_{2,k}$. 
\end{small}
Then, \eqref{eq:staterror}-\eqref{eq:staterrorStar} imply that 
\small \vspace{-0.15cm}
\begin{align}
\begin{array}{rl}
\tilde{x}^\star_{k|k} &= \overline{A}_{k-1} \tilde{x}_{k-1|k-1} + \overline{w}_{k-1},  \label{eq:X_tilda_star}
\\  \tilde{x}_{k|k}\hspace{-0.05cm} &=\hspace{-0.05cm} ( I \hspace{-0.1cm}-\hspace{-0.1cm} \tilde{L} C_2 ) \overline{A}_{k-1} \tilde{x}_{k-1|k-1} \hspace{-0.1cm}+\hspace{-0.1cm} ( I \hspace{-0.1cm}-\hspace{-0.1cm} \tilde{L} C_2 ) \overline{w}_{k-1} \hspace{-0.1cm}-\hspace{-0.1cm} \tilde{L} \overline{v}_{k-1}. 
\end{array}
\end{align}
\normalsize
Now, consider the following linear time-varying system:
\begin{align} \label{eq:equivalentsys}
x_{k+1}=\overline{A}_k x_k + \overline{w}_k, y_k=C_2 x_k + \overline{v}_k.
\end{align}
Systems \eqref{eq:X_tilda_star} and \eqref{eq:equivalentsys} are equivalent from the viewpoint of estimation, since the estimation error equations for both problems are the same, hence they both have the same objective. Therefore, the pair $(\overline{A}_k,C_2)$ needs to be uniformly detectable such that the observer is stable \cite[Section 5]{Anderson.Moore.1981}.
\qed
\vspace{-0.15 cm}
\subsection{Proof of Theorem \ref{thm:L_gain}}
\vspace{-0.1cm}
Starting from \eqref{eq:X_tilda_star}, we have

\small \vspace{-0.3cm}
\begin{align} 
\nonumber  \tilde{x}_{k|k} = ( I - \tilde{L} C_2 ) \overline{A}_{k-1} \tilde{x}_{k-1|k-1} \hspace{-0.1cm}+\hspace{-0.1cm} ( I - \tilde{L} C_2 ) \overline{w}_{k-1} - \tilde{L} \overline{v}_{k-1},
\end{align}
\normalsize
from which we can define a system with $\tilde{x}_{k|k}$ as its state and $\tilde{z}_{k|k}=\tilde{x}_{k|k}$ as the output:
\vspace{-0.15cm}
\begin{align*}
\begin{array}{rl}
 \tilde{x}_{k|k} &= ( I - \tilde{L} C_2 ) \overline{A}_{k-1} \tilde{x}_{k-1|k-1} \hspace{-0.1cm}+\hspace{-0.1cm} \begin{bmatrix} I - \tilde{L} C_2 & -\tilde{L} \end{bmatrix}\hspace{-0.05cm} \begin{bmatrix} \overline{w}_{k-1} \\ \overline{v}_{k-1} \end{bmatrix}\hspace{-0.1cm},
\\  \tilde{z}_{k|k} &= \tilde{x}_{k|k}.
\end{array}
\end{align*}
By \cite[Lemma 3]{de2006robust}, this system
has an $\mathcal{H}_\infty$ performance bounded by $ \eta $, if there exists a symmetric positive definite matrix $ P $ with rank $ n $ such that:

\footnotesize \vspace{-0.2cm}
\begin{align}
\hspace{-0.15cm}\begin{bmatrix} P & ( I -  \tilde{L} C_2 ) \overline{{A}^{i}} P & \begin{bmatrix} I - \tilde{L} C_2 & - \tilde{L} \end{bmatrix} & 0 \\ * & P & 0 & P \\ * & * & \eta I & 0 \\ * & * & * & \eta I \end{bmatrix}\hspace{-0.1cm} \succ \hspace{-0.1cm}0 , \forall i \hspace{-0.1cm}\in\hspace{-0.1cm}\left\{\hspace{-0.05cm}1,2,\dots ,N\hspace{-0.05cm}\right\}.
\end{align}
\normalsize
Notice that the referenced lemma requires the existence of a \emph{bounded matrix sequence}, which in our case is a sequence of time-invariant matrices ($P$ is the same for each $k$), that is obviously bounded. 
By plugging $ S = {P}^{-1} \succ 0 $ and applying some similarity transformations, we obtain

\vspace{-0.4cm}
\begin{footnotesize} 
\begin{align*}&\begin{bmatrix} 0 & S & 0 & 0 \\ * & 0 & 0 & 0 \\ * & * & I & 0 \\ * & * & * & I \end{bmatrix} 
\hspace{-0.05cm}\begin{bmatrix} P & ( I -  \tilde{L} C_2 ) \overline{{A}^{i}} P & \begin{bmatrix} I - \tilde{L} C_2 & - \tilde{L} \end{bmatrix} & 0 \\ * & P & 0 & P \\ * & * & \eta I & 0 \\ * & * & * & \eta I \end{bmatrix}
\hspace{-0.05cm} \begin{bmatrix} 0 & S & 0 & 0 \\ * & 0 & 0 & 0 \\ * & * & I& 0 \\ * & * & * & I \end{bmatrix}\\ 
 &=\begin{bmatrix} S & \overline{{A}^{i}}^\top ( I - {C}_{2}^\top \tilde{L}^\top )S & 0 & I \\ * & S & \begin{bmatrix} I - \tilde{L}C_2 & - \tilde{L} \end{bmatrix} & 0 \\ * & * & I & 0 \\ * & * & * & \eta I \end{bmatrix} \hspace{-0.15cm} \succ \hspace{-0.1cm} 0 \ \forall i \hspace{-0.05cm}\in\hspace{-0.1cm}\left\{\hspace{-0.05cm}1,2,\dots ,N\hspace{-0.05cm}\right\}\hspace{-0.1cm}. \end{align*}
\end{footnotesize}

\vspace{-0.2cm}
Setting $ Y \triangleq S \tilde{L} $ completes the proof.
\qed
\vspace{-0.1cm}
\subsection{Proof of Theorem \ref{thm:Hinfsufficient}}
\vspace{-0.1cm}
Suppose, for contradiction, that there exists an $\mathcal{H}_\infty$-observer for system \eqref{eq:mainSystem} with any sequence $\{\lambda_{i,k}\}_{k=0}^\infty$ for all $i \in \{1,2,\dots,N\}$ that satisfies $ 0 \leq \lambda_{i,k} \leq1 , \sum_{i=1}^N \lambda_{i,k} =1, \forall k $, but one of the constituent linear time-invariant systems (e.g., $ ({A}^{j},G,{C},H) $) is not strongly detectable. Since the $\mathcal{H}_\infty$-observer exists for any sequence of $\lambda_{i,k}$, particularly it exists when $\lambda_{j,k}=1$ and $\lambda_{i j,k}=0$, $\forall i \neq j$ for all $k$. However, we know from \cite{YongACC2018} that strong detectability is necessary for the stability of the linear time-invariant system $ ({A}^{j},G,{C},H) $, which is a contradiction.
%
\qed 
\vspace{-0.1cm}
\subsection{Proof of Theorem \ref{thm:error_bound}}
\vspace{-0.1cm}
To prove Theorem \ref{thm:error_bound}, we first find closed form expressions for the state and input estimation errors in the following: 
\vspace{-0.2cm}
\begin{lem}\label{lem:error_closedform}
	The state and input estimation errors are 
	
	\footnotesize \vspace{-0.2cm}
	\begin{align*}
	 \tilde{x}_{k|k}\hspace{-0.05cm}&=\hspace{-0.05cm}(\textstyle\prod_{j=0}^{k-1} A_{e,k-j})\tilde{x}_{0|0} \hspace{-0.05cm}+\hspace{-0.05cm}\sum_{i=1}^k (\textstyle\prod_{j=0}^{i-2}A_{e,k-j})(\Psi \overline{w}_{k-i} \hspace{-0.05cm}-\hspace{-0.05cm} \tilde{L} \overline{v}_{k-i}),\\ 
\nonumber \tilde{d}_{k-1}\hspace{-0.05cm}&=\hspace{-0.05cm}\textstyle\sum_{i=1}^N \lambda_{i,k-1} (-V_1 M_1 C_1 - V_2 M_2 C_2 A_{e,i}) \tilde{x}_{k-1|k-1}   \\ \nonumber &+(-V_1M_1+V_2M_2C_2G_1M_1)T_1\textstyle\sum_{i=1}^N \lambda_{i,k-1} v^i_{k-1}
 \\ & -V_2 M_2 C_2 \textstyle\sum_{i=1}^N \lambda_{i,k-1} w^i_{k-1}-V_2M_2T_2 \sum_{i=1}^N \lambda_{i,k} v^i_k. 
	\end{align*}
	\normalsize
\end{lem}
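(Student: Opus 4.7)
The plan is to derive the two closed-form expressions by iterating the one-step error recurrences already established in the proof of Lemma \ref{lem:uniform detecatbility}, and then to regroup the resulting terms using the definitions of $\Psi$, $\Phi$, $\hat{A}_k$, and $A_{e,i}$.

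For the state error, I would begin from the one-step recurrence
\begin{align*}
\tilde{x}_{k|k} = \Psi \overline{A}_{k-1} \tilde{x}_{k-1|k-1} + \Psi \overline{w}_{k-1} - \tilde{L}\,\overline{v}_{k-1}
\end{align*}
derived in \eqref{eq:X_tilda_star}. Using $\overline{A}_{k-1} = \Phi \hat{A}_{k-1}$ together with the affine decomposition $\hat{A}_{k-1} = \sum_{i=1}^N \lambda_{i,k-1}(A^i - G_1 M_1 C_1)$ (which uses $\sum_i \lambda_{i,k-1}=1$), the coefficient $\Psi \overline{A}_{k-1}$ rewrites as the $\lambda$-weighted combination $\sum_i \lambda_{i,k-1} A_{e,i}$, i.e.\ the time-varying matrix abbreviated $A_{e,k-1}$ in the lemma statement. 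A routine backward induction on $k$ then produces the telescoping product-sum displayed for $\tilde{x}_{k|k}$, with the convention that the empty product (the $i=1$ term) equals the identity.

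For the input error, I would start from the orthogonal decomposition $\tilde{d}_{k-1} = V_1 \tilde{d}_{1,k-1} + V_2 \tilde{d}_{2,k-1}$, using $\tilde{d}_{1,k-1}$ from \eqref{eq:inputerror1} evaluated at time $k-1$ and $\tilde{d}_{2,k-1}$ from \eqref{eq:Inputerror2} (which has already absorbed the $\tilde{d}_{1,k-1}$ substitution via $M_2 C_2 G_2 = I$). Expanding the two contributions and collecting the state-dependent part yields the coefficient $-V_1 M_1 C_1 - V_2 M_2 C_2 \hat{A}_{k-1}$, which after the same affine decomposition matches the $\lambda_{i,k-1}$-weighted sum stated in the lemma. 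Rewriting the measurement noise in its two orthogonal components via $v^i_{1,k-1} = T_1 v^i_{k-1}$ and $v^i_{2,k} = T_2 v^i_k$ then gathers the $v$-terms into the $T_1$- and $T_2$-indexed sums, while the process-noise term collects into $-V_2 M_2 C_2 \sum_i \lambda_{i,k-1} w^i_{k-1}$, reproducing the displayed expression.

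The main obstacle is purely bookkeeping: one must carefully track the two distinct time indices on the noise sums (the $w^i$ and $v^i_1$ terms are at $k-1$, whereas the $v^i_2$ term is at $k$) and keep straight the four orthogonal projectors $V_1, V_2, T_1, T_2$. Once the unbiasedness identities $M_1 \Sigma = I$ and $M_2 C_2 G_2 = I$ from Lemma \ref{lem:unbiased} are invoked to eliminate the explicit $d_{1,k-1}$ and $d_{2,k-1}$ dependencies (a step already performed in obtaining \eqref{eq:inputerror1} and \eqref{eq:Inputerror2}), all remaining manipulations are linear and mechanical.
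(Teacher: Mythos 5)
Your proposal matches the paper's own proof essentially step for step: the paper likewise iterates the one-step recursion $\tilde{x}_{k|k}=\Psi\overline{A}\,\tilde{x}_{k-1|k-1}+\Psi\overline{w}_{k-1}-\tilde{L}\overline{v}_{k-1}$ by induction to get the product-sum formula, and obtains $\tilde{d}_{k-1}$ by combining \eqref{eq:inputerror1} (at time $k-1$) and \eqref{eq:Inputerror2} through $V_1,V_2$ and rewriting the noise components via $T_1,T_2$. Your remark that the state-dependent coefficient comes out as $-V_1M_1C_1-V_2M_2C_2\hat{A}_{k-1}=\sum_{i}\lambda_{i,k-1}\bigl(-V_1M_1C_1-V_2M_2C_2(A^i-G_1M_1C_1)\bigr)$ is exactly what the computation yields; the lemma's writing of this with $A_{e,i}$ (which carries an extra $\Psi\Phi$) is an abuse of notation already present in the paper's own proof, not a gap in yours.
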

\begin{proof}
From \eqref{eq:X_tilda_star}, we have
\begin{align} \label{eq:x_tilda_PSI}
\tilde{x}_{k|k} = \Psi \overline{A}_k \tilde{x}_{k-1|k-1} + \Psi \overline{w}_{k-1} - \tilde{L} \overline{v}_{k-1}. 
\end{align}
We use induction and \eqref{eq:x_tilda_PSI} to obtain

\footnotesize\vspace{-0.2cm}\balance
\begin{align*}
\begin{array}{rl}
 &\tilde{x}_{1|1} = \Psi \overline{A}_1 \tilde{x}_{0|0} + \Psi \overline{w}_{0} - \tilde{L} \overline{v}_{0}=A_{e,1} \overline{x}_{0|0}+ \Psi \overline{w}_{1-1} - \tilde{L} \overline{v}_{1-1} \\ 
&=( \textstyle\prod_{j=0}^{1-1} A_{e,1-1}) \tilde{x}_{0|0} +\textstyle\sum_{i=1}^1(\textstyle\prod_{j=0}^{i-2} A_{e,1-j})( \Psi \overline{w}_{k-1} -\tilde{L} \overline{v}_{1-i}) \\ 
& \tilde{x}_{k+1|k+1} = \Psi \overline{A}_{k+1} \tilde{x}_{k|k} +\Psi \overline{w}_k - \tilde{L} \overline{v}_k \\ 
&=\hspace{-0.1cm}\Psi \hspace{-0.05cm}\overline{A}_{k+1}\hspace{-0.05cm} \big{[}\hspace{-0.05cm}(\textstyle\prod_{j=0}^{k-1}\hspace{-0.1cm} A_{e,k-j})\tilde{x}_{0|0}\hspace{-0.05cm} %
   +\hspace{-0.15cm}\textstyle\sum_{i=1}^k \hspace{-0.05cm} (\textstyle\prod_{j=0}^{i-2}\hspace{-0.05cm}A_{e,k-j}\hspace{-0.05cm})\hspace{-0.05cm}(\hspace{-0.05cm}\Psi \overline{w}_{k-i}\hspace{-0.1cm} -\hspace{-0.1cm} \tilde{L} \overline{v}_{k-i}\hspace{-0.05cm})\hspace{-0.05cm}\big{]}
\\ &\quad+\Psi \overline{w}_k-\hspace{-0.1cm} \tilde{L} \overline{v}_k= (A_{e,k+1} A_{e,k}...A_{e,1}) \tilde{x}_{0|0}+\Psi \overline{w}_k - \tilde{L} \overline{v}_k \\ 
&\quad + \textstyle\sum_{i=1}^k (A_{e,k+1} A_{e,k}...A_{e,k-(i-2)})(\Psi \overline{w}_{k-i} - \tilde{L} \overline{v}_{k-i})) \\ 
&=\hspace{-0.05cm}(\textstyle\prod_{j=0}^{k+1} \hspace{-0.1cm}A_{e,k+1-j}) \tilde{x}_{0|0} \hspace{-0.05cm}+\hspace{-0.05cm}\textstyle\sum_{i=0}^k (\textstyle\prod_{j=0}^{i-2}A_{e,k-j})(\Psi \overline{w}_{k-i} \hspace{-0.05cm}-\hspace{-0.05cm} \tilde{L} \overline{v}_{k-i}) \\  &=\hspace{-0.05cm}(\textstyle\prod_{j=0}^{k+1}\hspace{-0.1cm} A_{e,k+1-j})\hspace{-0.05cm} \tilde{x}_{0|0} 
 \hspace{-0.15cm}+\hspace{-0.15cm}\textstyle\sum_{i=1}^{k+1}\hspace{-0.1cm} (\textstyle\prod_{j=0}^{i-2}\hspace{-0.1cm}A_{e,k+1-j}\hspace{-0.05cm})\hspace{-0.05cm}(\hspace{-0.05cm}\Psi \overline{w}_{k+1-i}\hspace{-0.1cm} -\hspace{-0.15cm} \tilde{L} \overline{v}_{k+1-i}\hspace{-0.05cm}).
\end{array}
\end{align*}\normalsize

As for $\tilde{d}_{k-1}$, \eqref{eq:inputerror1}-\eqref{eq:Inputerror2} imply
 
\small\vspace{-0.35cm}\begin{align}
\begin{array}{ll} 
 &\tilde{d}_{k-1}=V_1 \tilde{d}_{1,k-1}+V_2 \tilde{d}_{2,k-1}
\\ &=\textstyle\sum_{i=1}^N \lambda_{i,k-1} (-V_1 M_1 C_1 - V_2 M_2 C_2 A_{e,i}) \tilde{x}_{k-1|k-1}   \\ 
 &+(-V_1M_1+V_2M_2C_2G_1M_1)T_1\textstyle\sum_{i=1}^N \lambda_{i,k-1} v^i_{k-1}
 \\   &-V_2 M_2 C_2 \textstyle\sum_{i=1}^N \lambda_{i,k-1} w^i_{k-1}\hspace{-0.1cm}-\hspace{-0.1cm}V_2M_2T_2 \textstyle\sum_{i=1}^N \lambda_{i,k} v^i_k. \vspace{-0.4cm}\hspace{-0.25cm}  
 \end{array} \label{eq:d_tilda} 
 \end{align}\qedhere
\normalsize
\end{proof}

\vspace{-0.1cm}
Now, we are ready to prove Theorem \ref{thm:error_bound}. First, we define

\small \vspace{-0.35cm}
\begin{align}
\label{BekCektk}
  B_{e,k}\hspace{-0.05cm} \triangleq \hspace{-0.05cm}\textstyle\prod_{j=0}^{k-1} A_{e,k-j} , C^i_{e,k}\hspace{-0.05cm} \triangleq \hspace{-0.05cm} \textstyle\prod_{j=0}^{i-2} A_{e,k-j}, 
   \overline{t}_k \hspace{-0.05cm}\triangleq \hspace{-0.05cm} \Psi \overline{w}_k \hspace{-0.1cm}-\hspace{-0.1cm} \tilde{L} \overline{v}_k
    \end{align} \normalsize
   for $ 1 \leq i \leq k $. Then,
  from Lemma \ref{lem:error_closedform}, it follows that 
  \begin{small}
\begin{align}\label{eq:xtildanorm_inequal}
\begin{array}{rl}
 \| \tilde{x}_{k|k}\| &=\| B_{e,k} \tilde{x}_{0|0} + \textstyle\small{\sum_{i=1}^k} C^i_{e,k} \overline{t}_{k-i}  \|  \\
&\leq \| B_{e,k} \| \| \tilde{x}_{0|0} \| +\| \textstyle\sum_{i=1}^k C^i_{e,k} \overline{t}_{k-i}  \|. \end{array}
\end{align}
\end{small}

\vspace{-0.3cm}Moreover, by similar reasoning, 

\vspace{-0.3cm}
\begin{footnotesize}
\begin{align}
\nonumber &\| B_{e,k} \| =\| \textstyle\prod_{j=0}^{k-1} A_{e,k-j}\| \leq \textstyle\prod_{j=0}^{k-1} \| A_{e,k-j} \| \\
\nonumber &=\textstyle\prod_{j=0}^{k-1}\| \Psi \Phi \hat{A}_{k-j} \|=\textstyle\prod_{j=0}^{k-1}\| \Psi \Phi \textstyle\sum_{i=1}^N \lambda^i_{k-j} (A^i-G_1M_1C_1)\| 
\\ &= \textstyle\prod_{j=0}^{k-1}\|  \textstyle\sum_{i=1}^N \lambda^i_{k-j} \Psi \Phi (A^i-G_1M_1C_1) \| \label{eq:B_ek}
\\ \nonumber &\leq \textstyle\prod_{j=0}^{k-1} \textstyle\sum_{i=1}^N \lambda^i_{k-j} \| \Psi \Phi (A^i-G_1M_1C_1) \|\hspace{-0.05cm}\leq \hspace{-0.05cm} \textstyle\prod_{j=0}^{k-1} \theta=\theta^k,
\end{align}
\end{footnotesize}
\vspace{-0.35cm}
\begin{small}
\begin{align} \label{eq:Cekt}
 \|\hspace{-0.05cm} &\hspace{-0.05cm}\textstyle\sum_{i=1}^k\hspace{-0.05cm} C^i_{e,k} \overline{t}_{k-i}  \| \hspace{-0.05cm}\leq \hspace{-0.05cm} \textstyle\sum_{i=1}^k\hspace{-0.05cm} \| C^i_{e,k} \overline{t}_{k-i} \| \hspace{-0.1cm}\leq\hspace{-0.1cm} \textstyle\sum_{i=1}^k\hspace{-0.05cm} \|C^i_{e,k} \| \| \overline{t}_{k-i} \|,\hspace{-0.05cm}\\
 \nonumber &\| C^i_{e,k} \| = \| \textstyle\prod_{j=0}^{i-2} A_{e,k-j} \| \leq \textstyle\prod_{j=0}^{i-2} \|A_{e,k-j} \|
 \\ &=\textstyle\prod_{j=0}^{i-2} \| \textstyle\sum_{s=1}^N \lambda_{s,k-j} A_{e,s} \| \leq \textstyle\prod_{j=0}^{i-2} \theta \leq \theta^{i-1}. \label{eq:C_ek} 
\end{align}\end{small}
Furthermore, from the definition of $\overline{w}_{k}$ and \eqref{BekCektk} we have \\
\begin{small}
 $\overline{w}_{k-i} = - \Phi (G_1 M_1 \textstyle\sum_{s=1}^N \lambda_{s,k-i} v^s_{1,k-i} -\textstyle\sum_{s=1}^N \lambda_{s,k-i} w^s_{k-i}) 
  -G_2 M_2 \textstyle\sum_{s=1}^N \lambda_{s,k-i} v^s_{2,k-i}, 
 \| \overline{t}_{k-i}\| = \| \Psi \overline{w}_{k-i} - \tilde{L} \overline{v}_{k-i} \| =
  \| -\Psi \Phi G_1 M_1 T_1 \textstyle\sum_{s=1}^N \lambda_{s,k-i} v^s_{k-i} + \Psi \Phi \textstyle\sum_{s=1}^N \lambda_{s,k-i} w^s_{k-i} 
- \Psi  G_2 M_2 T_2 \textstyle\sum_{s=1}^N \lambda_{s,k-i} v^s_{k-i} - \tilde{L} T_2 \textstyle\sum_{s=1}^N v^s_{k-i} \|
= \|\textstyle\sum_{s=1}^N \lambda_{s,k-i} (\Gamma v^s_{k-i} +(\Psi \Phi) w^s_{k-i})\| \leq  \overline{\eta}$, 
\end{small}
 from which, as well as \eqref{eq:xtildanorm_inequal}-\eqref{eq:C_ek},
we conclude that

\vspace{-0.4cm}
\begin{small}
\begin{align} 
 \| \tilde{x}_{k|k} \| \hspace{-0.05cm}\leq\hspace{-0.05cm} \| \tilde{x}_{0|0} \| \theta^k \hspace{-0.05cm}+\hspace{-0.05cm}  \overline{\eta} \textstyle\sum_{i=1}^k \theta^{i-1}\hspace{-0.05cm}=\hspace{-0.05cm} \| \tilde{x}_{0|0} \| \theta^k \hspace{-0.05cm}+\hspace{-0.05cm} \overline{\eta} \frac{1-\theta^k}{1-\theta}\hspace{-0.05cm}\triangleq \hspace{-0.05cm}\delta^x_k.\hspace{-0.05cm}  \label{eq:Staterror}
  \end{align} 
  \end{small}

\vspace{-0.4cm}   As for $\delta^d_{k-1}$, using Lemma \ref{lem:error_closedform} and \eqref{eq:d_tilda}, triangle inequality and the facts that $ 0 \leq \lambda_{i,k} \leq 1, \textstyle\sum_{i=1}^N \lambda_{i,k}=1 $ and submultiplicativity of matrix norms, we obtain the result. \qed

\vspace{-0.05cm}
\subsection{Proof of Theorem \ref{thm:error_convergence}}

 Notice that
$0 \leq\|A_{e,i}\|\leq\theta<1$ for all $i \in\left\{1,2,\dots ,N\right\}$ by assumption. So, $\theta^k$ 
in \eqref{eq:Staterror} 
vanishes in steady state,
which gives us the following steady state estimation radius:
\begin{small}
$\lim_{k \to \infty} \delta^x_k =\lim_{k \to \infty} \left(\| \tilde{x}_{0|0} \| \theta^k + \overline{\eta} \frac{1-\theta^k}{1-\theta}\right)=\frac{ \overline{\eta}}{1-\theta}$. 
 \end{small}
Using this and starting from the expression for $\delta^d_{k-1}$ in Theorem \ref{thm:error_bound}, it converges to steady state, as follows:
\begin{small}
$\lim_{k\to \infty} \delta^d_{k-1}
 = (\lim_{k\to \infty} \beta \delta^x_{k-1})\hspace{-0.05cm}+\hspace{-0.05cm} \| V_2 M_2 C_2 \| \eta_w
   +(\|(V_2 M_2 C_2G_1-V_1)M_1T_1\|+\|V_2M_2T_2\|) \eta_v
 =\frac{ \overline{\eta} \beta}{1-\theta}+\eta_w\| V_2 M_2 C_2 \| +\eta_v(\| V_2M_2T_2 \| + \| R \|)$.\qed 
 \end{small}

\end{document}